\newtheorem{theorem}{\bf Theorem}
\newtheorem{lemma}{\bf Lemma}
\newtheorem{definition}{\bf Definition}
\newtheorem{remark}{\bf Remark}
\newcommand{\sK}{\mathsf{K}}
\newcommand{\sM}{\mathsf{M}}
\newcommand{\sP}{\mathsf{P}}
\newcommand{\sT}{1}
\newcommand{\ssT}{\mathsf{T}}
\newcommand{\sA}{\mathsf{A}}
\newcommand{\sTu}{\mathsf{T^{\text{u}}}}
\newcommand{\sTd}{\mathsf{T^{\text{d}}}}
\newcommand{\gmt}{\bg_{\mathcal{M}\backslash\{i\} } } 
\newcommand{\nmt}{\bm{n}_{\mathcal{M}\backslash\{i\}}}
\newcommand{\cM}{\mathcal{M}}
\newcommand{\cK}{\mathcal{K}}
\newcommand{\gD}{\bg_D}
\newcommand{\bg}{\bm{g}}
\newcommand{\bc}{\bm{c}}
\newcommand{\bn}{\bm{n}}
\newcommand{\bu}{\bm{u}}
\newcommand{\tbc}{\tilde{\bc}}
\newcommand{\bfY}{\mathbf{Y}}
\newcommand{\tsY}{\tilde{\mathbf{Y}}}
\newcommand{\bfX}{\mathbf{X}}
\newcommand{\bfH}{\mathbf{H}}
\newcommand{\bfPhi}{\mathbf{\Phi}}
\newcommand{\mbf}[1]{{\mathbf{#1}}}
\newcommand{\tn}[1]{{\textnormal{#1}}}
\newcommand{\tld}[1]{{\tilde{#1}}}
\begin{document}
	%
	\title{Asymptotically Optimal Secure Aggregation for Wireless Federated Learning with Multiple Servers}
	%
	%
	%
	
	\author{Zhenhao Huang, Kai Liang, Yuanming Shi, Songze Li, and Youlong Wu
			\thanks{This work was in part presented at the \textit{IEEE International Symposium on Information Theory (ISIT)}, 2023 \cite{huang2023sgawmsfl}. }
			\thanks{Z. Huang, K. Liang, Y. Shi and Y. Wu are with the School of Information Science and Technology,	ShanghaiTech University, Shanghai, China. Emails: \{Huangzhh,liangkai,shiym,wuyl1\}@shanghaitech.edu.cn.  }
			\thanks{S. Li is with the School of Cyber Science and Engineering, Southeast University, Nanjing, China. E-mail: songzeli@seu.edu.cn.}
		}
	
	%
	%
	\maketitle

	\begin{abstract}
		 In this paper, we investigate the transmission latency of the secure aggregation problem in a \emph{wireless} federated learning system with multiple curious servers.
		We propose a privacy-preserving coded aggregation scheme where the servers can not infer any information about the distributed users' local gradients, nor the aggregation value. 
		In our scheme, each user encodes its local gradient into $\sK$ confidential messages intended exclusively for different servers using a multi-secret sharing method, and each server forwards the summation of the received confidential messages, while the users sequentially employ artificial noise alignment techniques to facilitate secure transmission. 
		Through these summations, the user can recover the aggregation of all local gradients.  
		We prove the privacy guarantee in the information-theoretic sense and characterize the uplink and downlink communication latency measured by \emph{normalized delivery time} (NDT), both of which decrease monotonically with the number of servers $\sK$ while increasing over most of the range of the number of users $\sM$.  
		Finally, we establish a lower bound on the NDT of the considered system and theoretically prove that the scheme achieves the optimal uplink and downlink NDT under the conditions $\sK \gg \sM \gg 0$ and $\sK \gg \sM$, respectively.  For arbitrary $\sK$ and $\sM$, the proposed scheme achieves the optimal uplink NDT within a multiplicative gap of $4$.  
	\end{abstract}
	
	\begin{IEEEkeywords}
		Secure aggregation, coded computing, communication latency, interference alignment,  wireless federated learning.
	\end{IEEEkeywords}
	
	\IEEEpeerreviewmaketitle

	\section{Introduction}
	\label{sec:intro}
	Recently, various large-scale learning applications have shown their potential with the rapidly increasing volume of modern datasets.
	Due to the dispersion of datasets and the requirements of low-latency, traditional centralized computation frameworks are becoming incompetent \cite{Zhu2020Toward, shi2020communication, shi2023taskoriented, letaief2021edge}. Therefore, many computational tasks are preferably performed distributively and require collaboration among parties in terms of the computational power and the collection of datasets. 
	However, privacy concerns hold back the data sharing \cite{Ulukus2022private, ma2023trusted}. Aiming at addressing such challenge, a popular framework named Federated Learning (FL) has been introduced, where a learning model is trained collaboratively by multiple users while keeping their datasets local \cite{mcmahan2017communication,kairouz2019advances}.
	The typical FL system requires the users to cooperate with the server to train the desired global model. Specifically, the users share the model updates computed based on the local data, and the server, which is honest but curious, is responsible for aggregating the local updates of users, but is required to learn nothing about the users' local data. 
	Despite the training datasets being kept locally, it has been shown that the data can still be reconstructed from users' local updates \cite{Geiping2020Inverting,Wang2019beyond,zhu2019deep}. 
	
	To reduce the privacy leakage\footnote{The definition of security and privacy are ambiguous in many existing literature, for a comprehensive understanding of the security and privacy risks of distributed machine learning, we refer the reader to \cite{ma2023trusted}. In this paper, we aim to protect the gradient transmission, which is typically the focus of the secure aggregation problem.}, differential privacy (DP) via noise addition has been proven effective in protecting the local updates in FL \cite{ geyer2017differentially, sun2020ldp}. 
	Differential Privacy (DP) is a noisy release mechanism that protects individual data by injecting permanent noise into it before disclosing it to an untrusted data aggregator \cite{dwork2014algorithmic}.  
	In addition, for FL in wireless access networks, i.e., the so-called wireless FL, the DP mechanism can take advantage of the superposition property and the natural noise of the wireless channel \cite{seif2020wireless, liu2020privacy, elgabli2021harnessing, yang2022differentially}.
	The work in \cite{seif2020wireless} proposed an analog aggregation scheme for the FL over a flat-fading Gaussian multiple access channel under the local DP constraints, and show that the superposition property of the wireless channel brings benefits of efficiency and privacy. The work in \cite{liu2020privacy} studied the impact of the power allocation and channel noise in the gradient transmission on convergence and privacy. 
	Although DP provides a popular privacy analysis framework, the privacy guarantee provided by DP comes at the cost of degrading the learning utility since the aggregation values are not accurate.
	
	With better assurance in learning utility and even stronger privacy constraints, the concept of secure aggregation has gained attention in FL \cite{bonawitz2016practical,Bonawitz2017practical}. 
	The key idea of secure aggregation is that the local gradients are masked based on cryptography or coding theory before they are sent to the server, and these masks cancel out when a trusted server or user aggregates the masked messages, whereas an unauthenticated party learns little or nothing about the original gradients. 
	In a fascinating recent work \cite{Bonawitz2017practical}, the masks are generated based on a cryptographic protocol and stored by all users using the secret sharing method, so that the masks can be reconstructed even if some users drop out. 
	An additive homomorphic encryption based method is proposed to preserve the privacy of the gradients so that even the aggregated gradients are not leaked to the server \cite{aono2017privacy}.
	Secure aggregation protocols based on cryptography usually incur extensive communication and computation in the FL scenario, hence many works focus on improving the computation and communication efficiency of the secure aggregation protocol \cite{bell2020secure,kadhe2020fastsecagg,So2021turboaggregate,yang2021lightsecagg, Zhao2022information}. The work in \cite{kadhe2020fastsecagg} proposed a multi-secret sharing scheme based on the Fast Fourier Transform, which reduces the computational cost without increasing the communication cost. The work in \cite{So2021turboaggregate} designs a multi-group circular aggregation strategy that achieves a communication overhead of $O(N\log{N})$ and provides privacy against up to $N/2$ colluding users, where $N$ is the number of users.  
	Using tools from coding theory, \cite{Zhao2022information} designed an information-theoretic secure aggregation protocol that significantly reduces the aggregation complexity but relies on a trusted third party to prepare random masks for users.
	In spite of its benefits, the secure aggregation schemes mentioned above only focus on the wired system and typically do not consider the characteristics of wireless scenarios.
	
	{On the other hand, with the advancement of communication technology, such as 5G and 6G, the wireless FL, has received increasing attention due to its vast potential and broad range of applications, including the Vehicle Networking and the Internet of Things \cite{chen2021distributed}, \cite{amiri2020federated}.
	Note that the aforementioned works \cite{bonawitz2016practical,Bonawitz2017practical,bell2020secure,kadhe2020fastsecagg,So2021turboaggregate,yang2021lightsecagg, Zhao2022information,geyer2017differentially, sun2020ldp,seif2020wireless, liu2020privacy, elgabli2021harnessing, yang2022differentially}  would face an risk of information leakage due to the inherent broadcast nature of wireless communication, i.e., wireless broadcast signals are received indiscriminately by all nodes, unlike wired communication where signals are sent to a specific target node.  For instance, exchanging gradients in wireless FL is vulnerable to an eavesdropping attack, in which a malicious actor may collect radio signals to infer private information for illegal purposes \cite{zhang2022wireless,ma2020safeguarding,wei2020federated}.
	Moreover, in the existing works  \cite{bonawitz2016practical,Bonawitz2017practical,bell2020secure,kadhe2020fastsecagg,So2021turboaggregate,yang2021lightsecagg, Zhao2022information,geyer2017differentially, sun2020ldp,seif2020wireless, liu2020privacy, elgabli2021harnessing, yang2022differentially}, the server is allowed to learn the aggregated global gradients, while the aggregation value may still be sensitive and leak individual information about users \cite{so2023securing} \cite{elkordy2023much}. 
	
	To address the issues mentioned above, this paper investigates the transmission latency of the secure aggregation problem over a \emph{wireless} network. We consider the secure aggregation in a wireless FL system with multiple curious servers, where the use of multiple servers has been shown helpful in protecting the privacy of uplink and downlink in the federated learning \cite{JiaXTfsl,liang2024privacy}.
	The main contributions are summarized as follows. 
	\begin{itemize}
		\item We propose a privacy-preserving scheme where each server cannot infer the local gradients, including the aggregation value. 
		In our scheme, each server aggregates the secret shares generated based on different local gradients so that they are oblivious to the original gradients, while the user can recover the aggregation of the gradients from the aggregated shares of different servers due to the linearity of the adopted secret sharing method. To reduce the information obtained by the servers through overhearing on the wireless channel, users will transmit the artificial noise in turns. These methods not only preserve the privacy but also reduce the communication latency.
		\item We characterize the uplink and downlink communication latency, measured by the \emph{normalized delivery time} (NDT), and prove the privacy guarantee in the information-theoretic sense. Both the uplink and downlink NDTs decrease monotonically with the number of servers $\sK$. The uplink NDT increases with the number of users $\sM$ when $\sM \geq \sqrt{\sK} + 1$, while the downlink NDT always increases with the number of users $\sM$. Compared to the single-server system, increasing the number of servers in our scheme can significantly reduce the uplink NDT with a relatively small growth of the downlink NDT.
		\item We also establish an information-theoretic lower bound on both the uplink and downlink NDTs for the considered system, and theoretically prove that the obtained uplink NDT is within a multiplicative gap of $4$ from the optimum. Furthermore, the uplink NDT is asymptotically optimal when $\sK \gg \sM \gg 0$, and the downlink NDT is asymptotically optimal when $\sK \gg \sM$.
	\end{itemize}

{This work is in part based on \cite{huang2023sgawmsfl}, where the lack of clarity in the proof of decodability led to misleading conclusions. In this version, we design a new transmission for the downlink\footnote{The conference version investigated the problem under the assumption of colluding servers, while also lacking an analysis of optimality.}. Additionally, we derive an information-theoretic lower bound on the uplink and downlink latency of the system and prove that the achievable latencies are asymptotically optimal.}
	
	\emph{Notations}: We use the sans serif font for constants, bold for vectors  and matrices, and calligraphic font for most sets. Let $\mathbb{N}^{+}$ denote the set of positive integers. The sets of complex numbers are denoted by $\mathbb{C}$. Let $\mathcal{CN}(m,\delta^2)$ denote the complex Gaussian distribution with a mean of $m$ and variance of $\delta^2$. The operator $|\cdot|$ is the cardinality of a set or the absolute value of a scalar number. For any $k\in\mathbb{N}^{+}$, define $[k]=\{1,2,...,k\}$. For sets $\mathcal{S}$ and $\mathcal{Q}$, we let $\mathcal{S}\backslash\mathcal{Q} = \{i: i\in\mathcal{S}, i\notin\mathcal{Q}\}$.
	The operation $(\cdot)^{T}$ denotes transposition.
	
	\section{System Model and Problem Formulation}
	\label{sec:formulation}
	\begin{figure} [htbp]
		\centering
		\subfloat[uplink]{
			\begin{minipage}[t]{\linewidth}
				\centering
				\includegraphics[width=0.95\textwidth]{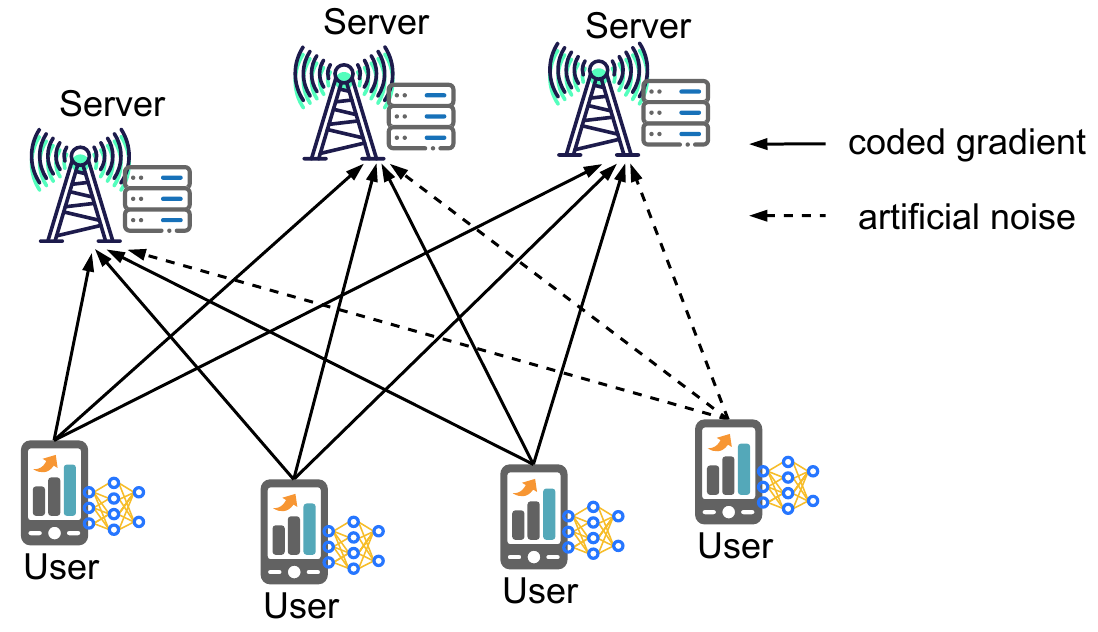}
			\end{minipage}
		}
		
		\subfloat[downlink]{
			\begin{minipage}[t]{\linewidth}
				\includegraphics[width=0.95\textwidth]{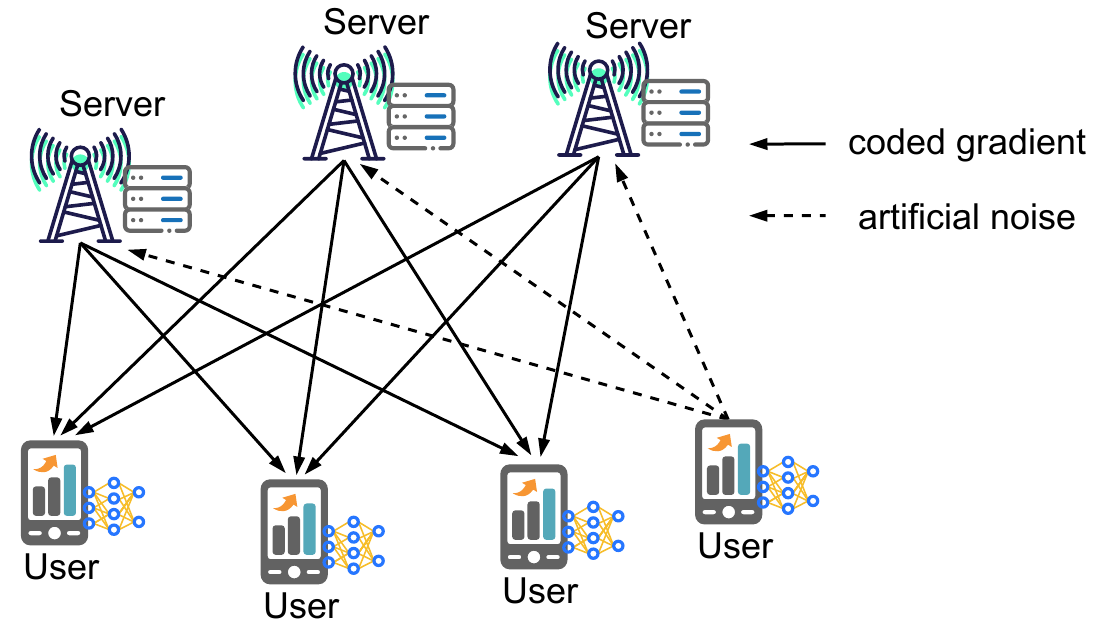}
			\end{minipage}
		}
		\caption{\label{sysmodel} Aggregation over wireless network with multi-server.}
	\end{figure}
	Consider a multi-server FL system as shown in Fig. \ref{sysmodel}. There exist a group of users, denoted by $\cM$ and indexed by $\{1,2,...,\sM\}$,  and a group of servers  denoted by $\cK$ and indexed by $\{1,2,...,\sK\}$, with $\sM=|\cM|$ and $\sK=|\cK|$. 
 Each user has a local private gradient $\bg_i\in\mathbb{F}^p$, where $\mathbb{F}^p$ is a vector space of dimension $p$ over a finite field $\mathbb{F}$. Each server is assumed to be semi-honest, following the protocol truthfully, but trying to gain additional information about the local $\bg_i$ and the aggregation $\gD = \sum_{i=1}^{\sM}\bg_i$ by using the information sent by the users.
	The users want to obtain the aggregation $\gD$ with the help of the servers, while the servers are unaware of the local gradient $\bg_i$ and the aggregation $\gD$ (see formal definition in Definition \ref{DefPrivacy}).  Assume that the channel state information (CSI) is perfectly known to all users and servers.
	
	For the uplink transmission, each user $i\in\cM$ first encodes the local gradient $\bg_i$ and some local randomness $\bn_i$ into a message $\bc_{j,i}\in \mathcal{W}_{j,i}$ destined for server $j$, i.e.,
	\begin{IEEEeqnarray}{rCl}
		(\bc_{1,i},...,\bc_{\sK,i}) = \phi_i(\bg_i,\bn_i), \nonumber
	\end{IEEEeqnarray}
	where $ \mathcal{W}_{j,i}$ is the corresponding message set with size of $\log | \mathcal{W}_{j,i}|$ bits representing and $\phi_i$ is an appropriate encoding function. 
	Let the uplink signal transmitted by user $i$ at time slot $t\in[\sTu]$ be ${X}_{i}(t)$, where $\sTu$ is the total channel uses of the uplink transmission. Based on these messages, the user $i\in\cM$ generates the complex channel inputs
	\begin{IEEEeqnarray}{rCl}
		\bfX_i \triangleq ( X_i(1),...,X_i(\sTu) ) = f_i^{(\sTu)}(\{\bc_{j,i}\}_{j\in\cK}),\nonumber
	\end{IEEEeqnarray}
	by an encoding function $f_i^{(\sTu)}$ on appropriate domains and the input signals satisfy an average power constraint $\sP$, i.e., $\frac{1}{\sTu} \sum_{t=1}^{\sTu} E[ | {X}_i(t)|^2 ] \leq \sP$. 
	At time slot $t$,  each server $j\in\cK$ observes a linear combination of the signals sent by all users, i.e.,
	\begin{IEEEeqnarray}{rCl}
		{Y}_j(t) = \sum_{i=1}^{\sM}  {H}_{j,i}(t)  {X}_{i}(t) + {Z}_j(t)\nonumber, 
	\end{IEEEeqnarray}
	where  $ {Z}_j(t)\sim \mathcal{CN}(0,1)$ denotes the additive white Gaussian noise (AWGN) at server $j$ and $ {H}_{j,i}(t)\in \mathbb{C}$ denotes the channel coefficient from user $i$ to the server $j$. All the channel coefficients and noise are assumed to be identically and independently (i.i.d.) distributed across the time and users.
	
	For the downlink transmission, based on the received signals   
	$\bfY_j  \triangleq (  {Y}_{j}(1), ... ,  {Y}_{j}( \sTu) )$, each server $j\in \cK$  decodes desired messages and generates  downlink messages  $\tbc_{i,j}\in \tilde{\mathcal{W}}_{i,j}$ intended for user $i$, where $ \tilde{\mathcal{W}}_{i,j}$ is the corresponding message set with size of $\log | \tilde{\mathcal{W}}_{i,j}|$ bits representing.  
	Let the downlink signal sent by server $j$ at time slot $t\in[\sTd] $ be $\tilde{ {X}}_{j}(t)$, where $\sTd$ is the total channel uses of the downlink transmission. 
	Based on these messages, server $j\in\cK$ produces the complex channel inputs
	\begin{IEEEeqnarray}{rCl}
		\tilde{\bfX}_j \triangleq ( \tilde{X}_j(1),...,\tilde{X}_j(\sTd) ) = \tilde{f}_j^{(\sTd)}(\{\tbc_{i,j}\}_{i\in\cM}),\nonumber
	\end{IEEEeqnarray}
	by an encoding function $\tilde{f}_j^{(\sTd)}$ on appropriate domains and the input signals satisfy an average power constraint $\sP$, i.e., $\frac{1}{\sTd} \sum_{t=1}^{\sTd} E[ |\tilde{ {X}}_j (t)|^2 ] \leq \sP$.
	The downlink signal at user $i\in\cM$ at time slot $t$, denoted by $\tilde{ {Y}}_i(t) \in \mathbb{C}$, is modeled by
	\begin{align*}
		\tilde{ {Y}}_i (t)\!=\! \sum_{j=1}^{\sK} \tilde{ {H}}_{i,j}(t) \tilde{ {X}}_j (t) \!+\! \tilde{ {Z}}_i (t),
	\end{align*}
	where $\tilde{ {Z}}_i (t)\sim \mathcal{CN}(0,1)$ denotes the noise at user $i$ and $\tilde{ {H}}_{i,j}(t)\in \mathbb{C}$ is the downlink channel coefficient from server $j$ to user $i$.
	
	
	After receiving $\tsY_i \triangleq ( \tilde{ {Y}}_i(1), ... , \tilde{ {Y}}_i(\sTd) )$, user $i$ decodes the messages $\{\tbc_{i,j}$, $j\in\cK\}$ intended for it, i.e.,
	\begin{IEEEeqnarray}{rCl}
		(\tbc_{i,1},...,\tbc_{i,\sK}) = \psi_i(\tsY_i)\nonumber.
	\end{IEEEeqnarray}
	where $\psi_i$ is an appropriate decoding function.
	Finally, the user $i$  produces an estimation  $\hat{\bg}_D$ based on $(\tbc_{i,1},...,\tbc_{i,\sK})$ and the local gradient $\bg_i$, where $\hat{\bg}_D$ is the estimation of ${\bg}_D$ satisfying  $\lim_{\sTu,\sTd\to \infty} \text{Pr}(\hat{\bg}_D\neq {\bg}_D) = 0$. 
	This constraint can also be written as 
	\begin{IEEEeqnarray}{rCl}\label{cons:correct}
		\lim\limits_{\sTu,\sTd\to \infty}H(\gD|(\tbc_{i,1},...,\tbc_{i,\sK}), \bg_i) = 0.
	\end{IEEEeqnarray}    
	
	We introduce the following definitions to formulate the communication latency and privacy-preserving constraints. Before that, we first define the following notations: 
	$\bc_{\cK,\cM}$ and $\tbc_{\cM,\cK}$ denote the uplink and downlink message sets, respectively, i.e.,  $\bc_{\cK,\cM}=(\bc_{j,i}:j\in\cK,i\in\cM) $ and $\tbc_{\cM,\cK}=(\tbc_{i,j}:i\in\cM,j\in\cK) $.  
	\begin{definition}\emph{Degree of freedom (DoF)}.\label{DefDof}
		Define the degree of freedom (DoF) of the uplink and downlink transmissions as   $d^{\text{up}}_{\text{sum}} = \lim_{\sP\to\infty} \frac{ H( \bc_{\cK,\cM} )} {\sTu \log (\sP)}$ and  $d^{\text{down}}_{\text{sum}} = \lim_{\sP\to\infty} \frac{H( \tbc_{\cM,\cK})  } { \sTd\log (\sP)}$, respectively.
	\end{definition} 
	\begin{definition}\emph{Normalized delivery time (NDT)\cite{sengupta2016cacheaided}}. \label{DefNDT}
		Let $\sA\triangleq  \log(|\mathbb{F}^p|)$ denote the bit length of each local gradient. The normalized uplink delivery time is defined as 
			$\bm{\Delta}^{\text{up}} = \lim\limits_{\sP,\sA\to\infty} \frac{\sTu}{\sA/\log{(\sP)}}$.
		The normalized downlink delivery time is defined as 
			$\bm{\Delta}^{\text{down}} = \lim\limits_{\sP,\sA\to\infty} \frac{\sTd}{\sA/\log{(\sP)}}$.
	\end{definition} 
	
	\begin{definition}\label{DefPrivacy}
		\emph{Privacy}.
		The equivocation for local gradients $\bg_\cM$ at a server $j\in\cK$ is defined as
		\begin{align}\label{def:equ_j}
			\Delta^{j}_{\bg_{\cM}} \triangleq \frac{ H(\bg_{\cM} | \bfY_{j}, \tilde{\bfY}_{j}) }{ H(\bg_{\cM}) }, 
		\end{align}
		where $\bg_{\cM} = \{ \bg_1, \bg_2, ... , \bg_\sM \}$ and $\bfY_{j}, \tilde{\bfY}_{j}$ are the received signals of the server $j$ from uplink and downlink transmission. 
		We say that the scheme is \emph{private} if
		\begin{align}
			\lim_{\sA,\sP\to\infty} \Delta^{j}_{\bg_{\cM}} &= 1-\epsilon, \forall j\in\cK, 
		\end{align}
		for arbitrary small positive constant $\epsilon$.
		
	\end{definition}
	
	The equation \eqref{def:equ_j} indicates the level at which the curious server is confused. To prove the \emph{privacy} guarantee, we assume that all $\bg_i$ are independent, and each follows a uniform distribution over the field. 
	The uniformity and independence of the gradients are required for the converse proof, but are not necessary for the achievability scheme \cite{Zhao2022information}. 
	A lower bound on the decoding error probability for $\bg_{\cM}$ at the server can be derived from the equivocation as follows \cite{he2016secrecy}. From Fano's inequality, we know \begin{align}
			P_e\geq \frac{H(\bg_{\cM}|\bfY, \tld{\bfY}) - 1}{H(\bg_{\cM})} = \Delta_{\bg_{\cM}} - \frac{1}{H(\bg_{\cM})},
		\end{align}
	Furthermore, when the entropy of the gradients is very large, it has $\lim\limits_{\sA\to\infty}P_e\geq  \Delta_{\bg_{\cM}} - \lim\limits_{\sA\to\infty}\frac{1}{H(\bg_{\cM})} = \Delta_{\bg_{\cM}}$. Thus, $P_e$ is asymptotically lower bounded by $\Delta_{\bg_{\cM}}$.
	Definition \ref{DefPrivacy} indicates that any server $j$ cannot infer the local gradients, including the aggregation value. 
	
	Our goal is to design the aggregation and communication schemes with as little NDT as possible, so that all users can recover the aggregation value $\bg_D$ with $\lim\limits_{\sTu,\sTd\to \infty} \text{Pr}(\hat{\bg}_D\neq {\bg}_D) = 0$ and satisfying the privacy guarantee.

	\section{Wireless Coded Secure Gradient Aggregation}
	\label{sec:wlcsga}
	In this section, we introduce a wireless secure gradient aggregation scheme for the system described in Section \ref{sec:formulation}. First, a simple example is given to illustrate the operation of the aggregation scheme, then we give the general description and the achievable NDT of the proposed scheme. 
	
	\subsection{An Illustrative Example}\label{illutrating example}
	Consider a distributed FL system consisting of $\sM=5$ users and $\sK=4$ servers. Each user wants the sum of the local gradients of all users, i.e., $\bg_D = \sum_{i=1}^{5} \bg_i$ with $\bg_i\in \mathbb{F}^{ {p}}$ calculated by user $i$ from its local data. 
	
	Each user $i\in [5] $ first splits $\bg_i$ into $r = 3$ parts, i.e., 
	$\bg_i = ( \bg_{i, 1}, \bg_{i, 2}, \bg_{i, 3} )$, and encodes $\bg_i$ using a Lagrange interpolation polynomial:
	\begin{align}\label{ex_encoding_gi}
		\mathcal{G}_i(x) &=  \bg_{i,1}\! \frac{(x\!\!-\!\!2)(x\!\!-\!\!3)(x\!\!-\!\!4)}{(1\!\!-\!\!2)(1\!\!-\!\!3)(1\!\!-\!\!4)} \!+\! \bg_{i,2}\! \frac{(x\!\!-\!\!1)(x\!\!-\!\!3)(x\!\!-\!\!4)}{(2\!\!-\!\!1)(2\!\!-\!\!3)(2\!\!-\!\!4)} \nonumber\\
		&+ \bg_{i,3}\! \frac{(x\!\!-\!\!1)(x\!\!-\!\!2)(x\!\!-\!\!4)}{(3\!\!-\!\!1)(3\!\!-\!\!2)(3\!\!-\!\!4)} +\!\bn_{i} \frac{(x\!\!-\!\!1)(x\!\!-\!\!2)(x\!\!-\!\!3)}{(4\!\!-\!\!1)(4\!\!-\!\!2)(4\!\!-\!\!3)}, 
	\end{align}
	where  $\bn_{i}$ is chosen uniformly at random from $\mathbb{F}^{ \frac{p}{2} }$. Let $\mathcal{F}(x)\triangleq \sum_{i=1}^5 \mathcal{G}_i(x)$, which is another polynomial function of degree $2$.  Note that  $\mathcal{F}(x)$ has a nice property: $\mathcal{F}(1)=\sum_{i=1}^5 \bg_{i,1}$, $\mathcal{F}(2)=\sum_{i=1}^5 \bg_{i,2}$ and $\mathcal{F}(3)=\sum_{i=1}^5 \bg_{i,3}$, which are exactly the sums of local gradient parts. 
	
	We then select $4$ distinct elements $\{\alpha_j\}_{j=1}^4$ in $\mathbb{F}$ such that $\{\alpha_j\}_{j=1}^4\cap [4]=\emptyset$. User $i$ generates the $\sK=4$ \emph{confidential} message $\bc_{j,i}\in \mathbb{F}^{ \frac{p}{2} }$ exclusively intended for server $j$ according to  $\bc_{j,i} = \mathcal{G}_i (\alpha_j)$, for $j\in[4]$.  
		
	  To send all confidential messages each user $i$ further split $\bc_{j,i}$ into $4$ segments of equal size, i.e., $\bc_{j,i}=\{c_{j,i}^{1}, c_{j,i}^{2}, c_{j,i}^{3}, c_{j,i}^{4}\}$. 
	All segments will be sent through a total of $5$ rounds of transmission each containing $\mathsf{T}^{\prime}=4n^{\Gamma}+4(n+1)^{\Gamma}$ channel uses (i.e., $\sTu = 5\mathsf{T}^{\prime}$), where $\Gamma = 12$ (determined by the alignment constraints in \eqref{eqAlgnCon}). In round $i\in[5]$,  user $i$   sends an artificial noise while the other users deliver their segments to desired servers. 
	Taking   round  5 as an example,
	the user $i\in[4]$ produces a complex channel input  of the form
	\begin{align*}
		\bfX_i = \sum_{j=1}^{4} \bfPhi^{[j,i]} \bu_{j,i}^{4}, 
	\end{align*}
	where  $\bu_{j,i}^{4}$ is an $n^{\Gamma}\times 1$ symbol vector encoded from $\bc_{j,i}^{4} $, and $\bfPhi^{[ji]}$ is a $\sTu\times n^{\Gamma}$ beamforming matrix. 
	Meanwhile, the user $5$ transmits the signal  
	\begin{align*}
		\bfX_5 = \sum_{j=1}^{4} \bfPhi^{[j,5]}\bm{v}_{j,5},
	\end{align*}
	where $\bm{v}_{j,5}$ is the $(n+1)^{\Gamma}\!\!\times\!1$ artificial noise symbol vector chosen from the Gaussian distribution $\mathcal{CN}(0,\frac{P}{(n+1)^{\Gamma}}\mathbf{I}_{(n+1)^{\Gamma}} )$, and $\bfPhi^{[j,5]}$ is the $\sTu\times (n+1)^{\Gamma}$ beamforming matrix. 
	For server $k$, the received signal is 
	\begin{align*}
		\bfY_k = \sum_{i=1}^{4} \bfH_{k,i} ( \sum_{j=1}^{4} \bfPhi^{[j,i]} \bu_{j,i} ) + \bfH_{k,5}\sum_{j=1}^{4}\bfPhi^{[j,5]}\bm{v}_{j,5} + \bm{Z}_k,
	\end{align*}
	where $\bfH_{k,i}$ is a $\sTu\times\sTu$ diagonal matrix with the diagonal elements being the channel coefficient. 
	
	\begin{figure*}[htpb]
		\centering
		\includegraphics[width=\linewidth]{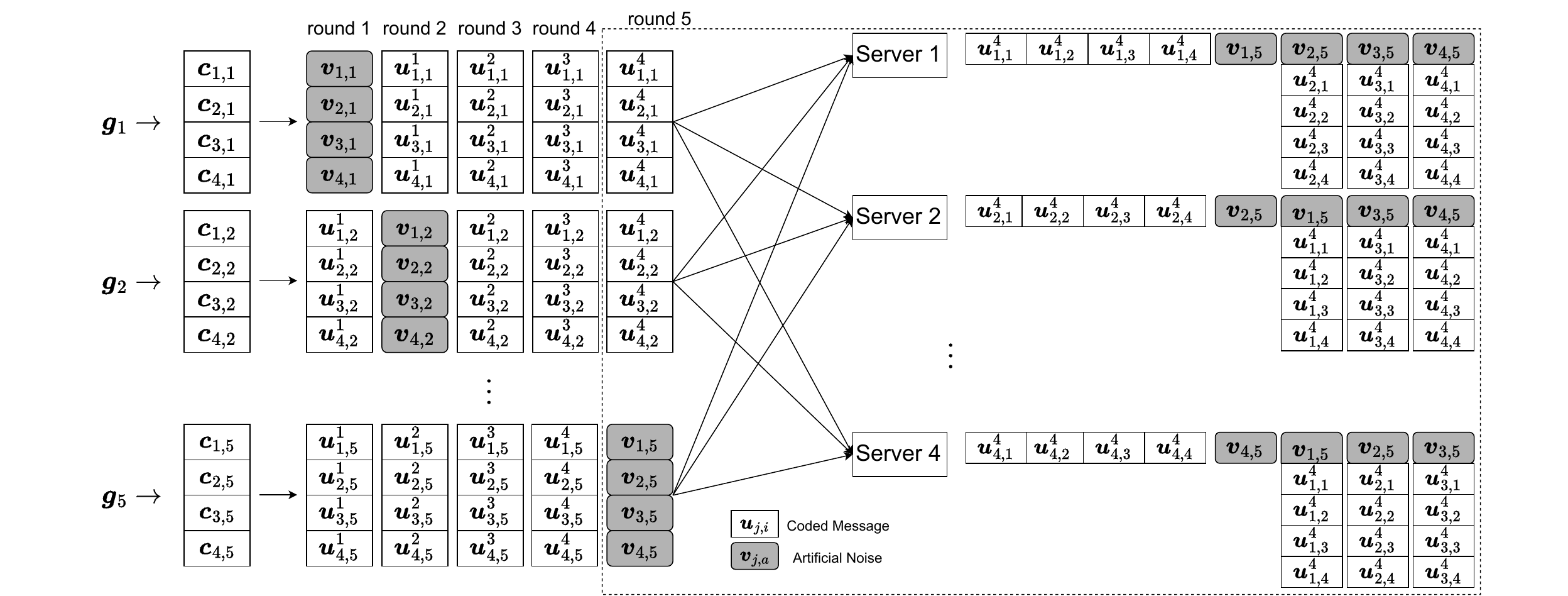}
		\caption{The uplink transmission over the wireless network. For server $1$, the confidential messages $\{\boldsymbol{u}^{4}_{1,1},...,\boldsymbol{u}^{4}_{1,4}\}$ and an artificial noise $\boldsymbol{v}_{1,5}$ occupy independent dimensions in the received signal space, while the messages intended for servers $2$, $3$ and $4$ are aligned to the subspace spanned by the artificial noise $\boldsymbol{v}_{2}$, $\boldsymbol{v}_{3}$ and $\boldsymbol{v}_{4}$, respectively. The alignment conditions are similar for the other servers.}
		\label{example_5x3}
	\end{figure*}
	
	According to \cite{wang2015secure}, there exist beamforming matrices $\bfPhi^{[j,i]}$ that satisfy the following alignment conditions such that the messages destined for server $j\in[4]$ are aligned to the subspace spanned by the artificial noise $\bm{v}_{j,5}$ at the server $k\in[4]\backslash\{j\}$, i.e., 
	\begin{align}\label{eqAlgnCon}
		\begin{cases}
			\bfH_{k,1}\bfPhi^{[j,1]} \prec  \bfH_{k,5}\bfPhi^{[j,5]} \\
			\bfH_{k,2}\bfPhi^{[j,2]} \prec  \bfH_{k,5}\bfPhi^{[j,5]} \\
			\bfH_{k,3}\bfPhi^{[j,3]} \prec  \bfH_{k,5}\bfPhi^{[j,5]} \\
			\bfH_{k,4}\bfPhi^{[j,4]} \prec  \bfH_{k,5}\bfPhi^{[j,5]} 
		\end{cases} \qquad \forall k\in[4]\backslash\{j\},
	\end{align} 
	which contains $\Gamma = 12$ relations. 
	At the same time, the desired messages for server $k$ occupy the signal space $\bfH_{k,1}\bfPhi^{[k,1]},...,\bfH_{k,4}\bfPhi^{[k,4]}$ which are mutually independent and independent of the artificial noise.
	The transmission is illustrated in Fig.~\ref{example_5x3},  and the transmission in other rounds follows the same way.

	 At the transmission of round 5, there are $4$ confidential messages from $\{\bc_{k,1},\bc_{k,2},\bc_{k,3},\bc_{k,4}\}$ for server $k$, which occupies a total of $4n^{\Gamma}$ independent dimensions. Pulsing $4(n+1)^{\Gamma}$ dimensions spanned by $\bfH_{k,5}\bfPhi^{[1,5]}\bm{v}_{1,5},...,\bfH_{k,5}\bfPhi^{[4,5]}\bm{v}_{4,5}$, the overall $\sTu = 4n^{\Gamma} + 4(n+1)^{\Gamma}$ dimensions are needed to successfully deliver the confidential messages. Thus, the sum-DoF of this channel is 
		$d^{\text{up}}_{\text{sum}} = \frac{ 5\times4\times 4n^{\Gamma} }{ 5(4n^{\Gamma} + 4(n+1)^{\Gamma}) } \approx 2$.
	On the other hand, according to the Definition \ref{DefDof}, we also obtain  $d^{\text{up}}_{\text{sum}}=\lim_{P\to\infty}\frac{\sA/3 \cdot 20 }{ \sTu \log (P)}$. Therefore, we can get the NDT of the uplink as follows:
		$\Delta^{\text{up}} = \lim_{\sA,\sP\to\infty} \frac{\sTu}{\sA/\log{\sP}} = \frac{20/3}{ d^{\text{up}}_{\text{sum}}} = \frac{10}{3}$.
	
	Next, each server $j\in[4]$ first recovers confidential messages $\{\bc_{j,1}, ..., \bc_{j,4}\}$ based on received signals, and then computes the sum of these confidential messages
	\begin{align*}
		\sum_{i=1}^{5} \bc_{j,i} = \mathcal{F}(\alpha_j)= \mathcal{G}_1(\alpha_j) + \mathcal{G}_2(\alpha_j) +\mathcal{G}_3(\alpha_j)  +\mathcal{G}_4(\alpha_j).
	\end{align*} 
	Therefore, each server $j$ owns one evaluation of the polynomial function $\mathcal{F}(x)$. The server $j$ generates the confidential messages $\tbc_{1,j}=\tbc_{2,j}=\tbc_{3,j}=\tbc_{4,j} = \mathcal{F}(\alpha_j)$. 
	For the downlink transmission, the same method as for the uplink is used to successfully send the confidential messages. 
	Note that the user node continues to sequentially transmit the artificial noise. Similarly, the sum-DoF of the downlink channel is $d^{\text{down}}_{\text{sum}} = \frac{ 5\times4n^{\Gamma} }{ 5(4n^{\Gamma} + 4(n+1)^{\Gamma}) } \approx 1/2$. Also, from the Definition \ref{DefDof}, we have $d^{\text{down}}_{\text{sum}}=\lim_{P\to\infty}\frac{\sA/3 \cdot 4 }{ \sTu \log (P)}$ and the NDT of the downlink is 
		$\Delta^{\text{down}} = \lim_{A,P\to\infty} \frac{T^{\text{down}}}{A/\log{P}} = \frac{4/3}{ d^{\text{down}}_{\text{sum}}} = \frac{8}{3}$.
	
	After the downlink transmission, each user attains $4$ distinct points of the polynomial $\mathcal{F}(x)$ of degree $3$, i.e., $(\alpha_1, \mathcal{F}(\alpha_1)), (\alpha_2, \mathcal{F}(\alpha_2)), (\alpha_3, \mathcal{F}(\alpha_3), (\alpha_4, \mathcal{F}(\alpha_4))$, and thus can    construct the function $\mathcal{F}(x)$ by interpolation.  Finally, every user  recovers the global gradient $\bg_D$ from the evaluations $\mathcal{F}(1)=\sum_{i=1}^5 \bg_{i,1}$, $\mathcal{F}(2)=\sum_{i=1}^5 \bg_{i,2}$ and $\mathcal{F}(3)=\sum_{i=1}^5 \bg_{i,3}$.
	The privacy guarantee is achieved by using artificial noise to confuse the decoding of confidential messages at the servers, and by masking the gradients with local randomness, rendering the original gradients undecipherable to the curious servers. See the Section \ref{subsec:priv} for more details.
	
	\subsection{General Description}
	\label{general description}
	The main idea is as follows. Each user $i\in[\sM]$ first splits each local gradient $\bm{g}_i$ into $r$ segments $\{\bm{g}_{i,k}\}_{k\in[r]}$ and encodes them into $\sK$ confidential messages based on a variate of multi-secret sharing method \cite{blakley1985security,franklin1992communication}, where each confidential message is exclusively intended for one server. In the downlink, each server sends a sum of its own confidential messages, which must be kept secret from others except the users. To ensure that the confidential messages are known only to the assigned one over the wireless channel, a user will send artificial noise during the message transmission based on an artificial noise alignment approach.  After decoding all the sums, each user uses Lagrange interpolation to recover $\sum_{i=1}^\sM\bm{g}_{i,1},\ldots,\sum_{i=1}^\sM\bm{g}_{i,r}$, respectively, to obtain the sum $\sum_{i=1}^\sM\bm{g}_i$.
	\begin{remark}
	The related work like artificial noise alignment \cite{wang2015secure} focuses on pure wireless communication problem that aims to directly deliver source messages instead of recovering a linear aggregation, and protecting the original source messages. In our setting,  it involves both uplink and downlink transmission, aims at aggregating information from distributed users, and protects both source messages and the aggregated values. As a result, new designs are required to incorporate the above techniques in a manner suited to the problem at hand. 
	\end{remark}
		
	Similar to the example in the section \ref{illutrating example}, each user $i$ first splits $\bg_i$ into $r$ segments, i.e., $\bg_{i} = ( \bg_{i,1},...,\bg_{i,r} )$, and encodes the $\{\bg_{i,k}\}_{k=1}^r$ through a multi-secret sharing method, where the gradient segments are referred to as the secrets.  
	Specifically, we used the following Lagrange interpolation polynomial \eqref{gen_encode_Gi} to encode the gradients, which is more natural and recently known as Lagrange coding \cite{yu2019lagrange}:
	\begin{align}\label{gen_encode_Gi}
		\mathcal{G}_i (x) = \sum_{k=1}^{r} \bg_{i,k} \cdot\!\!\!\!\!\! \prod_{l=1}^{[r+1]\backslash\{k\}}\!\!\!\! \frac{x\!\!-\!\!\beta_l}{\beta_k \!\!-\!\! \beta_l} + \bn_{i}  \cdot\!\!\!\!\!\! \prod_{l=1}^{[r+1]\backslash\{k\}}\!\!\!\! \frac{x\!\!-\!\!\beta_l}{\beta_{r+1}\!\!-\!\!\beta_l},
	\end{align}
	where $r$ is a designable parameters satisfying the relations $r+1\leq \sK$. $\beta_1, \ldots,\beta_{r+1}$ are distinct elements of $\mathbb{F}$ and $\bn_{i}$ is chosen uniformly at random from $\mathbb{F}^{ \frac{p}{r} }$. 
	Define another polynomial $\mathcal{F} (x)=\sum_{i\in\cM}\mathcal{G}_i (x)$, which is of degree $r$ and satisfies the property
	\begin{IEEEeqnarray}{rCl}\label{gen_encode_Fi}
		\mathcal{F} (\beta_k) = \sum_{i\in\cM}\mathcal{G}_i (\beta_k)=\sum_{i\in\cM} \bg_{i,k},\ \forall k\in[r].
	\end{IEEEeqnarray}
	We then select $\sK$ distinct elements $\alpha_1, ... , \alpha_\sK$ from $\mathbb{F}$ such that $\{ \beta_{l} \}_{l\in[r+1]}  \cap \{\alpha_j\}_{j\in\cK} = \emptyset$. User $i$ generates the encoded \emph{confidential} message exclusively intended for server $j$, i.e., $\bc_{j,i} = \mathcal{G}_i (\alpha_j)$. 
	The coding process can be represented by the following matrix formulation,
	\begin{align}\label{largrang_matrix}
		\begin{bmatrix}
			\bc_{1,1} \!\!&\!\! \!\!\cdots\!\! \!\!&\!\! \bc_{1,\sM} \\
			\vdots \!\!&\!\! \!\!\ddots\!\! \!\!&\!\!\vdots \\
			\bc_{\sK,1} \!\!&\!\! \!\!\cdots\!\! \!\!&\!\! \bc_{\sK,\sM}
		\end{bmatrix} 
		\!&=\!\underbrace{\begin{bmatrix}
				U_{1,1}  \!\!&\!\! \!\cdots\! \!\!&\!\! U_{1,r+\sT} \\
				\vdots \!\!&\!\! \!\ddots\! \!\!&\!\!\vdots \\
				U_{\sK,1} \!\!&\!\! \!\cdots\! \!\!&\!\! U_{\sK,r+\sT}
		\end{bmatrix}}_{U} \underbrace{\begin{bmatrix}
				\bg_{1,1} \!\!&\!\! \cdots \!\!&\!\! \bg_{\sM,1} \\
				\vdots \!\!&\!\! \ddots \!\!&\!\!\vdots \\
				\bg_{1,r} \!\!&\!\! \cdots \!\!&\!\! \bg_{\sM,r} \\
				\bn_{1,\sT} \!\!&\!\! \cdots \!\!&\!\! \bn_{\sM,\sT}
		\end{bmatrix}}_{G} \nonumber\\
		&=\begin{bmatrix}
			U^{L} \  U^{R}
		\end{bmatrix}\begin{bmatrix}
			\tilde{\bg} \\
			\tilde{\bn}
		\end{bmatrix},
	\end{align}
	where $U_{j,k} = \prod_{l\in[r+\sT]\backslash\{k\}}\frac{\alpha_j - \beta_{l}}{\beta_k - \beta_{l}}$, $U^{L}$ denotes the first $r$ columns of the $U$, $U^{R}$ denotes the last columns of the $U$, $\tilde{\bg}$ denotes the first $r$ rows of the $G$, and $\tilde{\bn}$ denotes the last rows of the $G$.
	
	Based on these messages, each user will generate the channel inputs and send them to different servers through a wireless interference network. The messages have assigned receivers and would like to be kept private from others. 
	In other words, due to the encoding process of the gradients, the considered network is converted as a $\sM\times \sK$ X-network with confidential messages \cite{wang2015secure}\cite{xie2014secure}. 
	The DoF of the $\sM\times\sK$ X-network with confidential messages is characterized as the Lemma \ref{lemma_dof_1} below. 
	\begin{lemma}\label{lemma_dof_1}
		(Theorem 2 \cite{wang2015secure}) An achievable sum secure degree of freedom (DoF) of the $\sM\times \sK$ X-network with confidential messages with time/frequency-varying channels is 
		\begin{IEEEeqnarray}{rCl}
			d(\sM,\sK) = \begin{cases}\label{dof_eh}
				\frac{\sK(\sM-1)}{\sK+\sM-2} & \sK=2, \\
				\frac{\sK(\sM-1)}{\sK+\sM-1} & \sK\geq 3.
			\end{cases}
		\end{IEEEeqnarray}
	\end{lemma}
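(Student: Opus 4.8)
The statement is verbatim Theorem~2 of \cite{wang2015secure}, so strictly speaking it can be invoked directly once we observe that, after the Lagrange encoding \eqref{gen_encode_Gi}, transmitter $i$ holds $\sK$ mutually independent symbols $\bc_{1,i},\dots,\bc_{\sK,i}$ with $\bc_{j,i}$ destined for receiver $j$, and the secrecy requirement of Definition~\ref{DefPrivacy} is precisely the confidentiality constraint of the $\sM\times\sK$ X-network (receiver $j$'s observation must carry asymptotically vanishing rate about $\{\bc_{j',i}: j'\neq j,\, i\in\cM\}$). For completeness I would sketch the achievability behind the bound rather than re-derive it.

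The scheme is \emph{artificial-noise alignment}. Each transmitter forms its channel input as a superposition of beamformed data streams --- one group per intended receiver --- together with artificial-noise streams emitted at full power (injected by a designated transmitter, or in the round-robin fashion used in our adaptation). For the time/frequency-varying channel the beamforming matrices are built from the asymptotic interference-alignment construction, using $n^{\Gamma}$- and $(n+1)^{\Gamma}$-dimensional symbol extensions, so that two families of conditions hold simultaneously: \emph{decodability} --- at each intended receiver the desired streams occupy linearly independent dimensions while all interfering data streams and all artificial noise collapse onto a common low-dimensional subspace --- and \emph{secrecy} --- at each unintended receiver every confidential data stream is aligned inside the subspace spanned by the artificial noise, so that, the noise sitting at the dominant power level, the leakage is $o(\log\sP)$.

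The DoF value then follows from a dimension count: over one block of symbol extensions, compare the number of interference-free dimensions usable for desired streams with the total number of channel uses, and let $n\to\infty$ so that $(n+1)^{\Gamma}/n^{\Gamma}\to 1$; summing over all messages gives $d(\sM,\sK)=\frac{\sK(\sM-1)}{\sK+\sM-1}$ when $\sK\ge 3$. The case $\sK=2$ is treated separately because each stream then has a single unintended receiver, so fewer noise dimensions are consumed and the bound improves to $\frac{\sK(\sM-1)}{\sK+\sM-2}$.

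I expect the main obstacle to be the simultaneous feasibility of the two families of alignment conditions --- producing beamforming matrices that enforce the secrecy containments and, at the same time, keep the desired streams full-rank (hence decodable) at every intended receiver. This is exactly what the asymptotic IA machinery of \cite{wang2015secure} handles: the containments hold by construction, and the full-rank conditions hold for generic time-varying channel coefficients by a Schwartz--Zippel-type genericity argument, after which the equivocation estimate is a standard wiretap/channel-resolvability bound for a message buried under stronger Gaussian noise. Since all of this is carried out in \cite{wang2015secure}, in this paper we will cite it and only verify the reduction described in the first step.
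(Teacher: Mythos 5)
Your proposal is correct and takes essentially the same route as the paper: Lemma~\ref{lemma_dof_1} is simply quoted from Theorem~2 of \cite{wang2015secure} without an independent proof, and your sketch of the artificial-noise-alignment achievability with the $n^{\Gamma}$ versus $(n+1)^{\Gamma}$ dimension count matches the paper's own derivation of $d^{\text{up}}_{\text{sum}}$ in \eqref{gen_dsum_up_a_1}. Nothing further is needed.
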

	Next, we introduce how to apply the artificial noise alignment to support the wireless secure gradient aggregation. We explain it in the case of $\sK\geq 3$, and that of $\sK=2$ is analogous and simpler with minor modifications.
	A total of $\sM\cdot \sK$ confidential messages will be sent through $\sM$ rounds transmissions. Every user will take turns sending the artificial noise instead of useful messages in one round of transmissions.
	For the uplink transmission, each $\bm{c}_{j,i}$ is splited into $\sM-1$ disjoint segments each of equal size, i.e., $\bm{c}_{j,i} = \{ \bm{c}^{1}_{j,i}, \cdots \bm{c}^{(\sM-1)}_{j,i} \}$. 
	The users will send these segments through $\sM-1$ rounds out of $\sM$ rounds sequentially.
	Each coded segment $\bm{u}_{j,i}^{\tau}$ is sent over a block of $\ssT^{\prime} = \sK(n+1)^{\Gamma}+(\sM-1)n^{\Gamma}$ channel uses, where $\Gamma = (\sM-1) (\sK-1)$.

	In round $a$, user $a$ is chosen to send the artificial noise.
	The user $i\in\cM\backslash\{a\}$ produces a complex channel input of the form 
	\begin{align*}
		\bfX_i = \sum_{j=1}^{\sK}\bfPhi^{[j,i]} \bu_{j,i}^{\tau(a,i)},
	\end{align*}
	where $\tau(a,i)=a-1$ if $i<a$ and $\tau(a,i)=a$ if $i>a$, $\bu_{j,i}^{\tau(a,i)}$ is a $n^{\Gamma}\times 1$ dimension symbol vectors encoded from the confidential message $\bc_{j,i}^{\tau(a,i)}$, and $\bfPhi^{[j,i]}$ is the corresponding beamforming matrix. The transmitted signal of the user $a$ is 
	\begin{align*}
		\bfX_a = \sum_{j=1}^{\sK} \bfPhi^{[j,a]} \bm{v}_{j,a},
	\end{align*}
	where $\bm{v}_{j,a}$ is the $(n+1)^{\Gamma}\times 1$ artificial noise symbol vector chosen from the Gaussian distribution $\mathcal{CN}(0,\frac{P}{(n+1)^{\Gamma}}\mathbf{I}_{(n+1)^{\Gamma}})$, and $\bfPhi^{[j,a]}$ is the corresponding beamforming matrix.
	The channel output at server $k$ is
	\begin{align*}
		\bfY_k \!=\! \sum_{i\in\cM\backslash\{a\}} \bfH_{k,i}\!\! \sum_{j=1}^{\sK} \bfPhi^{[j,i]}\bm{u}_{j,i}^{\tau(a,i)} +\bfH_{k,a}\sum_{j=1}^{\sK}\bfPhi^{[j,a]} \bm{v}_{j,a}\! + \!\bm{Z}_k,
	\end{align*} 
	where $\bfH_{k,i}$ is a $\sTu\times\sTu$ diagonal matrix with each entry being the channel coefficient from user $i$ to server $k$ at different time slots.
	We would like all the confidential messages destined for server $j\in\cK$ to be aligned with the subspace spanned by the artificial noise $\bm{v}_{j}$ at server $k\in\cK\backslash\{j\}$, i.e., to satisfy the following $\Gamma=(\sK-1)(\sM-1)$ conditions
	\begin{align}
			\text{server } k\in\cK\backslash\{j\}:
			\begin{cases}
				\bfH_{k,1}\bfPhi^{[j,1]} \prec  \bfH_{k,a}\bfPhi^{[j,a]} \\
				\qquad \vdots \\
				\bfH_{k,a-1}\bfPhi^{[j,a-1]} \prec  \bfH_{k,a}\bfPhi^{[j,a]} \\
				\bfH_{k,a+1}\bfPhi^{[j,a+1]} \prec  \bfH_{k,a}\bfPhi^{[j,a]} \\
					\qquad \vdots \\
				\bfH_{k,\sM}\bfPhi^{[j,\sM]} \prec  \bfH_{k,a}\bfPhi^{[j,a]}  
			\end{cases}		
	\end{align}
	Let $\mbf{T}^{[k,i]} = (\mbf{H}_{k,a})^{-1}\mbf{H}_{k,i} $ and reorder all the $\mbf{T}^{[k,i]}$ by the index from $1$ to $\Gamma$. 
	For all $j\in\cK$, let $\bfPhi^{[j,1]}=\cdots=\bfPhi^{[j,a-1]}=\bfPhi^{[j,a+1]}=\cdots=\bfPhi^{[j,\sM]}$, then $\bfPhi^{[j,1]}$ and $\bfPhi^{[j,a]}$ are designed as
	\begin{align*}
		&\bfPhi^{[j,1]} = \biggl\{\Big( \prod_{i=1}^{\Gamma}\big(\mbf{T}^{[i]}\big)^{\alpha_i} \Big)\mbf{w}^{[j]}: \alpha_i\in\{1,2,...,n\} \biggr\},\\
		&\bfPhi^{[j,a]} = \biggl\{\Big( \prod_{i=1}^{\Gamma}\big(\mbf{T}^{[i]}\big)^{\alpha_i} \Big)\mbf{w}^{[j]}: \alpha_i\in\{1,2,...,n+1\} \biggr\},
	\end{align*}
	where $\mbf{w}^{[j]}$ is the $\ssT^{\prime}\times 1$ vector with each element independently chosen from a continuous distribution with bounded absolute value. 
	
	With each occupying a subspace spanned by $\bfH_{k,i}\bfPhi^{[j,a]}$, the artificial noises $\{v_{j,a}:j\in\cK\}$ occupy subspaces of a dimension of $\sK(n+1)^{\Gamma}$.
	Meanwhile, the desired messages for server $k$ form the signal space $\bfH_{k,1}\bfPhi^{[k,1]},...,\bfH_{k,a-1}\bfPhi^{[k,a-1]},\bfH_{k,a+1}\bfPhi^{[k,a+1]},\ldots,\bfH_{k,\sM}\bfPhi^{[k,\sM]}$ which are mutually independent and independent of the artificial noise \cite{wang2015secure}. 
	For $i\in\cM\backslash\{a\}$, each desired message for receiver $k\in\cK$ spans a subspace $\bfH_{k,i}\bfPhi^{[k,i]}$ of $n^{\Gamma}$ dimensions, leading to a total of $(\sM-1)n^{\Gamma}$ independent dimensions.  
	The overall $\sTu = \sK(n+1)^{\Gamma} + (\sM-1)n^{\Gamma}$ dimensions are required to successfully deliver the confidential messages. The transmission in other rounds follows the same way.
	Therefore, the sum-DoF of this channel when $\sK\geq 3$ is 
	\begin{align}\label{gen_dsum_up_a_1}
		d^{\text{up}}_{\text{sum}} = \lim_{n\to\infty}\frac{\sM (\sK (\sM-1) n^{\Gamma})}{\sM(\sK(n+1)^{\Gamma}+ (\sM-1) n^{\Gamma})}=\frac{\sK(\sM-1)}{\sK + \sM-1},
	\end{align}
	which can also be inferred from Lemma \ref{lemma_dof_1}.
	The sum-DoF for $\sK=2$ is
	\begin{align}\label{gen_dsum_up_a_2}
		d^{\text{up}}_{\text{sum}}=\frac{\sK(\sM-1)}{\sK+\sM-2},
	\end{align} 
	which can be derived using a similar but slightly different method.
	On the other hand, $\bc_{i,j}$ has a length of $\sA/r$ bits for $i\in\cM$, $j\in\cK$, according to the Definition \ref{DefDof} we also obtain 
	\begin{align}\label{gen_dsum_up_def}
		d^{\text{up}}_{\text{sum}}=\lim_{\sP\to\infty}\frac{\sK \sM \sA/r   }{ \sTu\log (\sP)}.
	\end{align} 
	Combining \eqref{gen_dsum_up_a_1}, \eqref{gen_dsum_up_a_2} and \eqref{gen_dsum_up_def}, we can get the NDT of the uplink as follows:
	\begin{align}\label{ndt_up}
		\Delta^{\text{up}} & = \lim_{\sA,\sP\to\infty} \frac{\sTu}{\sA/\log{\sP}} = \lim_{\sA,\sP\to\infty} \frac{\sK \sM}{ r \cdot d^{\text{up}}_{\text{sum}}} \nonumber\\
		& = \begin{cases}
			\frac{\sM}{ r }\frac{\sM}{\sM-1} & \sK = 2., \\
			\frac{\sK + \sM-1}{ r }\frac{\sM}{\sM-1} & \sK\geq 3.
		\end{cases}
	\end{align}
	
	After receiving the signals, each server $j\in\cK$ first recovers confidential messages $\{\bc_{j,1}, ..., \bc_{j,\sM}\}$ as in the interference alignment problem, then computes the sum of these confidential messages, i.e., 
	\begin{align}
		\mathcal{F}(\alpha_j) 
		=&\sum_{i=1}^{\sK} \bc_{j,i}= \mathcal{G}_1(\alpha_j) + ... + \mathcal{G}_\sM(\alpha_j) \nonumber\\
		=& \sum_{k=1}^{r} \bg_{D,k} \!\!\!\!  \prod_{l=1}^{[r+\sT]\backslash\{k\}} \!\! \frac{\alpha_j - \beta_{l}}{\beta_k - \beta_l} + \sum_{k=r+1}^{r+\sT} \!\! \bn_{D,k} \!\!\!\! \prod_{l=1}^{[r+\sT]\backslash\{k\}} \!\! \frac{\alpha_j - \beta_l}{\beta_k - \beta_l},\nonumber
	\end{align}
	where $\bn_{D,k} = \sum_{i=1}^{\sM}\bn_{i,k}$.
	We can see that $\mathcal{F}$ is a polynomial function of degree $r$ and $  \mathcal{F}(\beta_k)=\bg_{D,k}$ for $k\in[r]$. 
	
	For the downlink transmission, each server $j\in\cK$ generates the confidential messages $\tbc_{1,j} = \tbc_{2,j} = ... = \tbc_{\sM,j} = \mathcal{F}(\alpha_j)$, where $\tbc_{i,j}$ is intended for user $i$ for $i\in[\sM]$. To successfully send the confidential messages, the same method of noise alignment is used for the downlink. The users continue to transmit the artificial noise sequentially. The communication channel can be seen as a $\sK\times(\sM-1)$ X-network with a helper and confidential messages. 
	Similar to the uplink, $\tbc_{i,j}$ from server $j$ is split to $\sM-1$ segments each of equal size, i.e., $\tbc_{i,j}={\tbc_{i,j}^{1}, \ldots, \tbc_{i,j}^{(\sM-1)}}$. The server will send segments through $\sM-1$ rounds out of $\sM$ rounds sequentially.
	For completeness, we describe the downlink transmission at  round $a\in[\sM]$.
	The transmitted signals from the server $i\in\cK$ have the form 
	\begin{IEEEeqnarray*}{rCl}
	\tld{\mbf{X}}_i = \sum_{j\in\cM\backslash\{a\}} \tld{\bfPhi}^{[j,i]}\tld{\bm{u}}_{j,i}^{\tau(a,i)},
	\end{IEEEeqnarray*}	
	where $\tau(a,i)=a-1$ if $i<a$ and $\tau(a,i)=a$ if $i\geq a$, $\tld{\bm{u}}_{j,i}^{\tau(a,i)}$ is a $n^{\Gamma^\prime}\times 1$ symbol vectors encoded from the messages $\tld{c}_{j,i}^{\tau(a,i)}$ and $\tld{\bfPhi}^{[j,i]}$ is the corresponding beamforming matrix.
	The user $a$ is chosen to transmit artificial noise and its transmitting signals have the form 
	 \begin{align*}
	 \tld{\mbf{X}}_a = \sum_{j\in\cM\backslash\{a\}} \tld{\bfPhi}^{[j,a]}\tld{\bm{v}}_{j,a},
	 \end{align*}
	where $\tld{\bm{v}}_{j,a}$ is a $(n+1)^{\Gamma^\prime}\times 1$ artificial noise symbol vector and $\tld{\bfPhi}^{[j,a]}$ is the corresponding beamforming matrix. The received signals at a user $j\in\cM\backslash\{a\}$ are
	\begin{align*}
		\tld{\bfY}_j = \sum_{i=1}^{\sK}\tld{\bfH}_{j,i}\!\!\!\! \sum_{m\in\cM\backslash\{a\}}\!\!\!\!\tld{\bfPhi}^{[m,i]} \tld{\bu}_{m,i} + \tld{\bfH}_{j, a}\!\!\!\!\sum_{m\in\cM\backslash\{a\}}\!\!\!\!\tld{\bfPhi}^{[m,a]}\tld{\bm{v}}_m + \tld{\mbf{Z}}_j,
	\end{align*}
	where $\tld{\mbf{H}}_{j,i}$ is a $\sTd\times\sTd$ diagonal matrix with each entry being the channel coefficient from the server $i$ to a receiver $j$.
	For the confidential messages intended for user $j\in\cM\backslash\{a\}$, the following $\Gamma^{\prime} = (\sK+\sM-3)\sK$ relations are desired.
	\begin{align}
		\begin{cases}
			\text{user } m \in\cM\backslash\{j,a\}: 
			\begin{cases}
				\tld{\bfH}_{m,1}\tld{\bfPhi}^{[j,1]} \prec  \tld{\bfH}_{m,a}\tld{\bfPhi}^{[j,a]} \\
				\tld{\bfH}_{m,2}\tld{\bfPhi}^{[j,2]} \prec  \tld{\bfH}_{m,a}\tld{\bfPhi}^{[j,a]} \\
				\qquad \vdots \\
				\tld{\bfH}_{m,\sK}\tld{\bfPhi}^{[j,\sK]} \prec  \tld{\bfH}_{m,a}\tld{\bfPhi}^{[j,a]}  
			\end{cases}\\
			\text{server } s \in\cK:
			\begin{cases}
				\tld{\bfH}_{s,1}\tld{\bfPhi}^{[j,1]} \prec  \tld{\bfH}_{s,a}\tld{\bfPhi}^{[j,a]} \\
				\qquad \vdots \\
				\tld{\bfH}_{s,2}\tld{\bfPhi}^{[j,s-1]} \prec  \tld{\bfH}_{s,a}\tld{\bfPhi}^{[j,a]} \\
				\tld{\bfH}_{s,2}\tld{\bfPhi}^{[j,s+1]} \prec  \tld{\bfH}_{s,a}\tld{\bfPhi}^{[j,a]} \\
				\qquad \vdots \\
				\tld{\bfH}_{s,\sK}\tld{\bfPhi}^{[j,\sK]} \prec  \tld{\bfH}_{s,a}\tld{\bfPhi}^{[j,a]}  
			\end{cases}.
		\end{cases}
	\end{align}
	\begin{remark}
		We consider that the servers work in \emph{full-duplex} mode, i.e., the server can transmit and receive the signals simultaneously. Therefore, to prevent a server from attaining additional information from the signals transmitted by the other servers in the downlink, the alignment conditions at the servers are needed. If the servers work in the \emph{half-duplex} mode, i.e., the servers cannot transmit and receive the signals simultaneously, the alignment conditions on the servers can be removed, then $\Gamma^{\prime} = (\sM-2)\sK$.
	\end{remark}

	Similar to the uplink, let $\tld{\bfPhi}^{[j,1]}=\cdots={\bfPhi}^{[j,a-1]}={\bfPhi}^{[j,a+1]}=\ldots=\tld{\bfPhi}^{[j,\sM]}$ for $j\in\cM\backslash\{a\}$, $\tld{\bfPhi}^{[j,1]}$ and $\tld{\bfPhi}^{[j,a]}$ are designed as
	\begin{align*}
		&\tld{\bfPhi}^{[j,1]} = \biggl\{\Big( \prod_{i=1}^{\Gamma}\big(\tld{\mbf{T}}^{[i]}\big)^{\alpha_i} \Big)\tld{\mbf{w}}^{[j]}: \alpha_i\in\{1,2,...,n\} \biggr\},\\
		&\tld{\bfPhi}^{[j,a]} = \biggl\{\Big( \prod_{i=1}^{\Gamma}\big(\tld{\mbf{T}}^{[i]}\big)^{\alpha_i} \Big)\tld{\mbf{w}}^{[j]}: \alpha_i\in\{1,2,...,n+1\} \biggr\},
	\end{align*}
	where $\tld{\mbf{T}}^{[k,i]} = (\tld{\mbf{H}}_{k,a})^{-1}\tld{\mbf{H}}_{k,i} $ are reordered by the index from $1$ to $\Gamma^{\prime}$, and $\tld{\mbf{w}}^{[j]}$ is the $\sTd\times 1$ vector with each element independently chosen from a continuous distribution with bounded absolute value. 	
	For $\sM\geq 3$, the sum-DoF of the downlink channel is 
	\begin{IEEEeqnarray*}{rCl}
	d^{\text{down}}_{\text{sum}} = \frac{ \sK}{\sM + \sK-1}.
	\end{IEEEeqnarray*}
	Also, $\tbc_{i,j}$ has a length of $\sA/r$ bits for $i\in\cM$, $j\in\cK$, according to the Definition \ref{DefDof} we have $d^{\text{down}}_{\text{sum}} = \lim_{P\to\infty}\frac{\sK\sA/r}{\sTd\log(\sP)}$.
	Therefore, we can get the NDT of the downlink is
	\begin{align}\label{ndt_down}
		\Delta^{\text{down}} & = \lim_{\sA,\sP\to\infty} \frac{\sTd}{\sA/\log{\sP}} = \lim_{\sA,\sP\to\infty} \frac{\sK }{ r \cdot d^{\text{down}}_{\text{sum}}} \nonumber\\
		& = \frac{\sM + \sK-1}{ r }.
	\end{align} 
	After the downlink transmission, each user attains $\sK$ distinct points $(\alpha_i, \mathcal{F}(\alpha_i))_{i\in\cK}$ of the polynomial $\mathcal{F}(x)$ of degree $r$, and thus can construct the function $\mathcal{F}(x)$ by interpolation if   $\sK\geq r+\sT $.  Finally, each user recovers the global gradient $\bg_D=(\bg_{D,1},\ldots,\bg_{D,r})$ from evaluations $\{\bg_{D,k} = \mathcal{F}(\beta_k):k\in[r]\}$.
	
	From \eqref{ndt_up} and\eqref{ndt_down}, we know that the achievable NDT of the proposed scheme is as the following theorem.
	\begin{theorem}\label{theorem:achNDT}
		For the wireless private gradient aggregation system with $\sM\geq 3$ users and $\sK\geq 2$ servers, the following uplink and downlink NDTs are achievable:
		\begin{IEEEeqnarray}{rCl}\label{eqAchNDT}
			\Delta^{\tn{up}} \!=\! \begin{cases}
				\frac{\sM}{r} \frac{\sM}{\sM-1}, & \sK\!=\!2 \\
				\frac{\sK+\sM-1}{r} \frac{\sM}{\sM-1}, & \sK\!\geq\! 3  
			\end{cases},\ \ 	
			\Delta^{\tn{down}} \!=\!\frac{\sK+\sM-1}{r} ,
		\end{IEEEeqnarray}
		where $r+1\leq \sK$.
	\end{theorem}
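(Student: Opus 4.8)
The plan is to verify that the construction in Section~\ref{general description} attains the stated NDTs by (i) reducing the uplink and downlink to the secure X-network problem, (ii) invoking Lemma~\ref{lemma_dof_1} to obtain the sum secure DoF, (iii) translating DoF into NDT via Definitions~\ref{DefDof} and~\ref{DefNDT}, and (iv) confirming end-to-end decodability through the Lagrange-interpolation structure.

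First, for the uplink: after each user $i$ forms the $\sK$ confidential messages $\bc_{j,i}=\mathcal{G}_i(\alpha_j)$ via the identity~\eqref{largrang_matrix}, the channel is an $\sM\times\sK$ X-network in which user $i$ holds an independent message for each server $j$ that must stay secret from the non-intended servers. I would argue that the round-robin schedule — in round $a$, user $a$ emits artificial noise while the other $\sM-1$ users each deliver one segment $\bc_{j,i}^{\tau(a,i)}$ — realizes, in every round, the achievable scheme of \cite{wang2015secure} with $\sM-1$ active transmitters, $\sK$ receivers, and one helper, so that $\sK(\sM-1)n^{\Gamma}$ confidential symbols are delivered per round over $\ssT'=\sK(n+1)^{\Gamma}+(\sM-1)n^{\Gamma}$ channel uses. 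Summing over the $\sM$ rounds and letting $n\to\infty$ gives $d^{\text{up}}_{\text{sum}}$ as in~\eqref{gen_dsum_up_a_1} for $\sK\ge3$ and~\eqref{gen_dsum_up_a_2} for $\sK=2$; combining with the counting identity~\eqref{gen_dsum_up_def}, namely $\sTu = \sK\sM\sA/(r\,d^{\text{up}}_{\text{sum}})$, and taking $\sA,\sP\to\infty$ yields the uplink expression in~\eqref{eqAchNDT}. The downlink is handled analogously, now as a $\sK\times(\sM-1)$ X-network with a rotating helper and confidential messages $\tbc_{i,j}=\mathcal{F}(\alpha_j)$; counting the $\sK(\sM-1)n^{\Gamma'}$ confidential symbols against $\sTd$ channel uses gives $d^{\text{down}}_{\text{sum}}=\sK/(\sM+\sK-1)$ and hence $\Delta^{\text{down}}=(\sK+\sM-1)/r$.

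Next, decodability. Each server $j$ recovers $\{\bc_{j,1},\dots,\bc_{j,\sM}\}$ because in every round its desired signals occupy $n^{\Gamma}$ mutually independent dimensions per active user, jointly independent of the $\sK(n+1)^{\Gamma}$-dimensional artificial-noise subspace, so zero-forcing succeeds almost surely; it then forms $\sum_i\bc_{j,i}=\mathcal{F}(\alpha_j)$ by the linearity of~\eqref{gen_encode_Gi}. Since $\{\beta_l\}_{l\in[r+1]}\cap\{\alpha_j\}_{j\in\cK}=\emptyset$ and $\sK\ge r+1$, each user collects $\sK\ge r+1$ distinct evaluations of the degree-$r$ polynomial $\mathcal{F}$, reconstructs $\mathcal{F}$ by interpolation, and reads off $\bg_{D,k}=\mathcal{F}(\beta_k)$ for $k\in[r]$, i.e.\ $\hat\bg_D=\bg_D$; the vanishing error probability follows because the alignment/zero-forcing failure event has measure zero under the continuous channel distributions, which establishes~\eqref{cons:correct}.

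The main obstacle I expect is establishing the \emph{existence and genericity} of beamforming matrices that simultaneously satisfy all $\Gamma=(\sK-1)(\sM-1)$ (resp.\ $\Gamma'=(\sK+\sM-3)\sK$) alignment inclusions while keeping the desired signal subspaces full-rank and disjoint from the noise subspace — in particular, justifying that the asymptotic monomial construction $\bfPhi^{[j,1]}=\{(\prod_i(\mbf{T}^{[i]})^{\alpha_i})\mbf{w}^{[j]}\}$ behaves as required when the helper index $a$ rotates, and that the dimension counts are tight so that no extra channel uses are needed. This is exactly where the reduction to \cite{wang2015secure} must be invoked with care: one checks that after the rotation the effective per-round network matches the hypothesis of Lemma~\ref{lemma_dof_1}, that the helper's noise spans the ``cost'' subspace at every unintended receiver, and that the $\sK=2$ case needs the separate count producing $\sK+\sM-2$ in the denominator. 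The remaining steps — the Lagrange/secret-sharing algebra and the DoF-to-NDT substitution — are routine.
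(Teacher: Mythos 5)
Your proposal is correct and follows essentially the same route as the paper: the paper's proof of Theorem~\ref{theorem:achNDT} is precisely the construction of Section~\ref{general description}, i.e.\ the reduction to an $\sM\times\sK$ (resp.\ $\sK\times(\sM-1)$) X-network with confidential messages and a rotating artificial-noise helper, the dimension count $\ssT'=\sK(n+1)^{\Gamma}+(\sM-1)n^{\Gamma}$ per round yielding the sum-DoF of Lemma~\ref{lemma_dof_1}, the DoF-to-NDT conversion via Definitions~\ref{DefDof} and~\ref{DefNDT}, and decodability by Lagrange interpolation of the degree-$r$ polynomial $\mathcal{F}$ from $\sK\geq r+1$ evaluations. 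The obstacle you flag (existence of the beamforming matrices under the rotating helper) is handled in the paper exactly as you suggest, by deferring to the asymptotic monomial construction of \cite{wang2015secure}.
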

	\begin{proof}
		See the scheme present above.
	\end{proof} 
	
	Recall that $r$ is the number of partitions of gradients in Lagrange coding.  From \eqref{eqAchNDT}, we can see that the uplink and downlink NDTs both decrease with $r$.
	When $\sK\geq 3$ and $r=\sK-1$, the uplink and downlink NDTs can be written as $(\frac{\sM}{\sK-1}+1)\frac{\sM}{\sM-1}$ and $\frac{\sM}{\sK-1}+1$, respectively. It is observed that the uplink and downlink NDTs decrease monotonically with the number of servers $\sK$. The uplink NDTs increase monotonically with the number of users when $\sM>\sqrt{\sK}+1$, while the downlink NDT always increases monotonically with the number of users.
	Note that the uplink and downlink NDTs can be written as $\frac{\sM\sK}{r\cdot d^{\text{up}}_{\text{sum}}}$ and $\frac{\sK}{r\cdot d^{\text{down}}_{\text{sum}}}$ as shown in \eqref{ndt_up} and \eqref{ndt_down}. This indicates that interference alignment technology, which allows multiple users to send messages to multiple servers simultaneously, can reduce uplink communication latency.
	
	\begin{remark}
		Recall that the degree of \eqref{gen_encode_Fi} is $r+1$, it is sufficient to recover the aggregation value by only $r+1$ points $(\alpha_i, \mathcal{F}(\alpha_i))_{i\in\mathcal{R^{\prime}}}$, where $\mathcal{R^{\prime}}\subset\cK$ and $|\mathcal{R^{\prime}}|=r+1$. If $r< \sK-1$, it is not necessary to let all $\sK$ servers participate in the downlink transmission. If $s$ servers do not participate in the downlink transmission where $s<\sK-r-1$, the downlink NDT is $\frac{\sM+\sK-s-1}{r}\geq \frac{\sM+\sK-s-1}{\sK-s-1}=\frac{\sM}{\sK-s-1}+1$. Although the NDT becomes larger, the property demonstrates the resistance of the secret-sharing method to the straggler problem, which has been shown in other scenarios.
	\end{remark}
	

	\subsubsection{Comparison with single server systems}\label{rem: single}
	By comparison, consider a single-server system with $\sM$ users. In the uplink, the time-division multiple access (TDMA) is applied for transmitting local updates from the users to the server. For the downlink, the server directly broadcasts the aggregated results to all users.
	The corresponding sum-DoF of the uplink channel and downlink channel are $d^{\tn{up}}_{s} = 1 = \frac{\sM\sA}{\ssT^{\tn{u}} \log(\sP)}$ and $d^{\tn{down}}_{s} = 1 = \frac{\sA}{\ssT^{\tn{d}} \log(\sP)}$, respectively. We can attain the NDTs in the single server system are
	\begin{align}\label{eq: single}
		\Delta^{\tn{up}}_{\tn{s}} = \frac{\ssT^{\tn{u}}}{\sA/\log(P)} = \sM \text{\ and\ } \Delta^{\tn{down}}_{\tn{s}} = \frac{\ssT^{\tn{d}}}{\sA/\log(P)} = 1.
	\end{align}
	
	Note that the benefits of introducing multiple servers in our scheme are two-fold: First, the multiple-server system can ensure the aggregation results to be accessible to the users while remaining private to the server, while the single-server system cannot achieve this goal without incorporating additional encryption techniques. Second, our multi-server scheme improves communication efficiency, i.e., achieving a lower uplink NDT except the case $\sK=2$ (see Fig.~\ref{res01}), which is expected as we additionally account for the privacy constraint. 
	{Fig.~\ref{res01} compares the uplink and downlink NDTs versus $\sM$, including the NDTs of the proposed scheme in Theorem \ref{theorem:achNDT} for $\sK = 2,4$ and $8$, as well as the NDT of single server system described as \eqref{eq: single}.
	We observe that both the uplink and downlink NDTs increase monotonically over most of the range of $\sM$, while decreasing with $\sK$. Additionally, as $\sK$ increases, the growth rate of the uplink and downlink NDTs with respect to $\sM$ slows down. This indicates that increasing the number of servers is beneficial for the scalability of the aggregation systems.}
	\begin{figure}[tb]
		\centering
		\includegraphics[width=\linewidth]{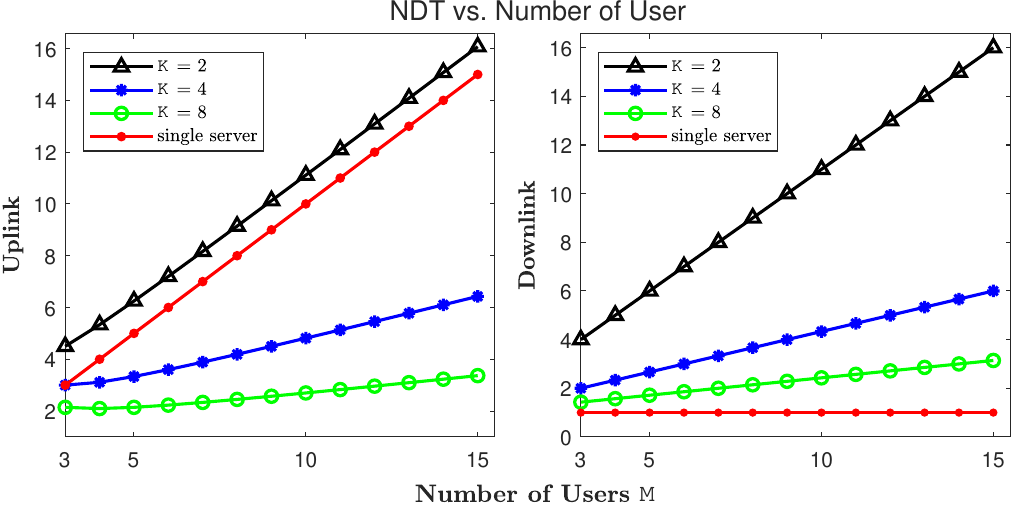}
		\caption{{Uplink and Downlink NDTs versus $\sM$ when $r=\sK-1$.}  }
		\label{res01}
	\end{figure}

	\subsubsection{Complexity of the proposed scheme}
		{The complexity of the proposed scheme mainly comes from the coding of secret sharing method and the design of beamforming and decoding matrices for noise alignment.}
 		Complexity of the secret sharing: For each user, generating the secret shares using \eqref{gen_encode_Gi} can be viewed as interpolating polynomials of degree $r$, and evaluating them at $\sK$ points. There are efficient algorithms with almost linear computational complexities. It is shown that interpolating a polynomial of degree $r$ and evaluating it at arbitrary $\sK$ points have complexities of $O(r\log^2 r\log\log r )$ and $O(\sK\log^2 \sK\log\log \sK )$, respectively \cite{kedlaya2011fast}. Thus, the complexity for each user to generate the secret shares is $O(\sK\log^2 \sK\log\log\sK\cdot\frac{p}{r})$. Similarly, the user can decode the gradient segments from the aggregated secret shares using polynomial interpolation and evaluation, which has a complexity of $O(r\log^2 r\log\log r \cdot\frac{p}{r})$. Note that the users can generate the random noise in \eqref{gen_encode_Gi} separately, and only need to agree on the elements $\{\beta_{l},\cdots,\beta_{r+1},\alpha_1,\cdots,\alpha_\sK\}$ where the resulting communication overhead is negligible.
		 Complexity of noise alignment: In the delivery phase, each transmitter needs beamforming matrices of size $\ssT\times n^{\Gamma}$ for each of its transmitted messages in $\ssT$ transmission timeslots with $n\in\mathbb{N}^{+}$, and the beamforming matrices design are given in Section \ref{general description}. The complexity for each transmitter is given by $O(\sK^{2}(\sM+\sK)\ssT n^{(\sM+\sK)\sK})$. After receiving all signals in $\ssT$ transmission timeslots, each receiver needs to compute an inversion of a matrix of size $\ssT\times \ssT$ and its complexity is given by $O(\ssT^{3})$. 
		\begin{remark}
		To reduce the complexity of interference alignment and remove the need for CSI at transmitters (CSIT), blind artificial noise alignment that combines blind interference alignment with artificial noise transmission can be applied \cite{gou2011aiming,wang2015secure}. The transmission scheme incorporating blind interference alignment follows a similar principle to our scheme presented above but assumes that each receiver is equipped with a single antenna that can switch between multiple predefined modes. This can significantly reduce computational complexity and mitigate the challenges associated with the acquisition of CSIT. The application of this approach to distributed computation systems without privacy consideration, such as wireless MapReduce, has been explored in \cite{huang2024coded,lu2024blind}. 
		\end{remark}

	\section{Optimality and Privacy Analysis}
	\label{sec:opt_priv analysis}
	In this section, we analyze the optimality and the privacy guarantee of the proposed scheme. 
	For ease of description, we use the following notations to denote the sets of messages, local randomness, and signals:
	\begin{IEEEeqnarray*}{rCl}
	 &&\bc_{\cK,\cM}=\{\bc_{i,j}:i\in\cK,j\in\cM\},  \\
	 &&\tld{\bc}_{\cM,\cK} = \{\tld{\bc}_{j,i}: j\in\cM, i\in \cK\}, \\
	&&\bm{u}_{\cK,\cM} = \{\bm{u}_{j,i}: j\in\cK, i\in \cM\},  \\
	&&\bm{v}_{\cK}=\{\bm{v}_{j,a}: j\in\cK\}, \quad \tld{\bm{v}}_{\cM}=\{\tld{\bm{v}}_{j,a}: j\in\cM\backslash\{a\}\},\\
	&&\bfY_{\cK}=\{ \bfY_j: j\in\cK \}, \quad \tilde{\bfY}_{\cM}=\{ \tilde{\bfY}_j: j\in\cM \}, \\
	&&\bg_{\cM}=\{\bg_i: i\in\cM\}, \quad \bn_{\cM} = \{\bn_{i}:i\in\cM\}
\end{IEEEeqnarray*}
	
	\subsection{Converse bound}
	For the considered system, we have an information-theoretic lower bound as the following theorem.
	\begin{theorem}\label{theorem:conv}
		For the wireless private gradient aggregation system with $\sM$ users and $\sK$ servers, the uplink and downlink NDTs are lower-bounded as:
		\begin{IEEEeqnarray}{rCl}\label{eqConvNDT}
			\Delta^{\tn{up}}_{\tn{lb}} = \frac{\max\{\sM,\sK\} }{\sK-1},\ \ 	
			\Delta^{\tn{down}}_{\tn{lb}} = \frac{\sK}{\sK-1},
		\end{IEEEeqnarray}
	\end{theorem}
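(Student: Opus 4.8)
The plan is to prove the two lower bounds separately by cut-set type arguments that combine the correctness constraint~\eqref{cons:correct}, the privacy constraint in Definition~\ref{DefPrivacy}, and standard DoF outer bounds for wireless networks. For the downlink bound $\Delta^{\tn{down}}_{\tn{lb}} = \frac{\sK}{\sK-1}$, I would first argue that each user must recover $\bg_D$, whose entropy is $\sA$ bits, from the downlink signals $\tsY_i$ alone (together with $\bg_i$). Since the downlink is a broadcast-type channel from $\sK$ servers, the sum-rate delivered to any single user is DoF-limited; but the privacy constraint forces the servers to jointly inject artificial noise occupying a constant fraction of the signal dimensions, so that no single server (working in full-duplex) can decode $\bg_D$ from the aggregate downlink transmission. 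The key is to lower-bound $\sTd$ by showing that the useful DoF toward a user cannot exceed $\sK-1$: one server's worth of dimensions must be sacrificed to the artificial noise in order to satisfy $\lim \Delta^{j}_{\bg_\cM} = 1-\epsilon$. Combining $d^{\tn{down}}_{\tn{sum}} \le \sK-1$ with the requirement that $\sK$ copies of $\mathcal{F}(\alpha_j)$ (or more precisely, enough evaluations to reconstruct the degree-$r$ polynomial, hence $\Theta(\sA)$ bits toward each user scaled appropriately) must be delivered yields $\Delta^{\tn{down}} \ge \frac{\sK}{\sK-1}$.

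For the uplink bound $\Delta^{\tn{up}}_{\tn{lb}} = \frac{\max\{\sM,\sK\}}{\sK-1}$, I would handle the two terms in the maximum via two distinct cuts. First, to get the $\frac{\sK}{\sK-1}$ term, observe that the collection of confidential messages $\bc_{\cK,\cM}$ effectively carries, per server, the aggregate data needed downstream; the privacy constraint again costs one server's DoF, bounding the effective uplink sum-DoF by $\sK-1$, while the amount of information that must cross the uplink is at least $\Theta(\sK \sA / r)$ but reconstruction of $\bg_D$ needs $\sA$ usable bits routed through $\ge \sK$ server-evaluations; a careful accounting gives $\Delta^{\tn{up}} \ge \frac{\sK}{\sK-1}$. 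Second, to get the $\frac{\sM}{\sK-1}$ term, I would use a genie/induction argument over users: consider revealing all but one user's signals and randomness to a server; by the privacy constraint the server still learns nothing about that user's gradient, which forces each user to inject enough independent noise that the per-user uplink DoF cannot scale with $\sM$, so the total uplink time must grow at least linearly in $\sM$ while the channel offers only $\sK-1$ interference-free dimensions after noise alignment — giving $\Delta^{\tn{up}} \ge \frac{\sM}{\sK-1}$. Taking the maximum of the two cuts yields the claim.

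The main obstacle I anticipate is making the ``one server's worth of DoF is lost to artificial noise'' step rigorous as an information-theoretic outer bound rather than a scheme-specific observation. The privacy constraint in Definition~\ref{DefPrivacy} is stated in terms of equivocation $H(\bg_\cM \mid \bfY_j, \tilde\bfY_j)$, so I would need to translate ``server $j$ decodes almost nothing'' into a statement that the useful-signal subspace at any decoder (user) has dimension at most $\sK-1$ per channel use in the DoF-limit. This likely requires a compound-channel or secrecy-capacity style argument: fix the channel realizations, apply the secure-DoF converse techniques (e.g., the approach underlying Lemma~\ref{lemma_dof_1} in~\cite{wang2015secure}), and combine with Fano's inequality on the correctness side and the equivocation bound on the privacy side. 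A secondary technical point is bookkeeping the normalization: one must carefully relate $\sTu, \sTd$ to $\sA$ through $d^{\tn{up}}_{\tn{sum}}, d^{\tn{down}}_{\tn{sum}}$ as in Definitions~\ref{DefDof} and~\ref{DefNDT}, and verify that the $r$-dependence cancels in the final NDT expressions so that the bound is indeed $r$-free. I would also check the edge cases $\sK=2$ and small $\sM$ separately, since the DoF formula in Lemma~\ref{lemma_dof_1} bifurcates there.
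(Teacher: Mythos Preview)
Your proposal has a genuine structural gap: it reasons about the converse in terms of the \emph{achievability scheme} (artificial noise alignment, polynomial evaluations $\mathcal{F}(\alpha_j)$, the parameter $r$, ``one server's worth of dimensions sacrificed to noise''), but a converse must hold for \emph{every} encoding/decoding strategy, not just the Lagrange-plus-alignment scheme. Your own remark that the main obstacle is making the ``one DoF lost to noise'' step rigorous rather than scheme-specific is exactly right --- and in fact that step, as you phrase it, cannot be made rigorous, because nothing forces an arbitrary scheme to use artificial noise at all. Likewise the worry about ``$r$-dependence cancelling'' is a red flag: $r$ is a design parameter of one particular scheme and should never appear in a converse argument.

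The missing idea is a purely information-theoretic combination of correctness and privacy via \emph{Han's inequality}, with no reference to the wireless secure-DoF machinery or to any specific coding. Fix a user $i$ and condition on $(\bg_{\cM\backslash\{i\}},\bn_{\cM\backslash\{i\}})$. Correctness (together with the fact that $\tbc_{i,\cK}$ is a function of $\bc_{\cK,\cM}$) gives $H(\bg_D\mid \bc_{\cK,\cM},\bg_{\cM\backslash\{i\}},\bn_{\cM\backslash\{i\}})=0$, while privacy gives $I(\bg_D;\bc_{j,\cM}\mid \bg_{\cM\backslash\{i\}},\bn_{\cM\backslash\{i\}})\le\epsilon$ for each single $j$. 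Hence $\sA=H(\bg_i)\le I(\bg_D;\bc_{\cK\backslash\{j\},\cM}\mid \bc_{j,\cM},\ldots)+\epsilon\le H(\bc_{\cK\backslash\{j\},i})+\epsilon$, and Han's inequality over $j\in\cK$ yields $\sA\le\frac{\sK-1}{\sK}\sum_{k}H(\bc_{k,i})+\epsilon$. Summing over $i$ and bounding $\sum_{i,k}H(\bc_{k,i})$ by the MIMO cut $\min\{\sM,\sK\}\,\sTu\log\sP$ gives the uplink bound $\max\{\sM,\sK\}/(\sK-1)$ in one stroke --- no separate cuts for the two terms of the max are needed. The downlink bound follows by the same template applied to $\tbc_{t,\cK}$ and the single-receiver cut $H(\tilde{\bfY}_t)\le\sTd\log\sP$. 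This is the clean route; your secure-DoF/genie plan would at best reprove Lemma~\ref{lemma_dof_1}, which is an achievability-side statement, not a system converse.
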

	\begin{proof}
		See Appendix \ref{sec:convproof}.
	\end{proof}

	\begin{theorem}\label{corol:gap}
		For the considered aggregation system, let $r=\sK-1$. The uplink NDT are asymptotically optimal if $\sK\gg\sM$ and $\sM\gg 0$, and the downlink NDT are asymptotically optimal if $\sK\gg\sM$. Moreover, the multiplicative gap between the achievable uplink NDT and the optimal uplink NDT is no more than $4$, regardless of  $\sK$ and $\sM$.
	\end{theorem}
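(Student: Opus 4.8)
The plan is to compare the achievable NDTs from Theorem~\ref{theorem:achNDT} (with $r=\sK-1$) against the converse bounds of Theorem~\ref{theorem:conv}, treating the three claims — asymptotic uplink optimality, asymptotic downlink optimality, and the constant multiplicative gap — separately. Setting $r=\sK-1$ in \eqref{eqAchNDT}, the achievable uplink NDT for $\sK\geq 3$ becomes $\Delta^{\tn{up}} = \frac{\sK+\sM-1}{\sK-1}\cdot\frac{\sM}{\sM-1} = \left(1+\frac{\sM}{\sK-1}\right)\frac{\sM}{\sM-1}$, and the achievable downlink NDT is $\Delta^{\tn{down}} = \frac{\sK+\sM-1}{\sK-1} = 1+\frac{\sM}{\sK-1}$. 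The converse bounds are $\Delta^{\tn{up}}_{\tn{lb}} = \frac{\max\{\sM,\sK\}}{\sK-1}$ and $\Delta^{\tn{down}}_{\tn{lb}} = \frac{\sK}{\sK-1}$.

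First I would handle the two asymptotic statements. For the downlink, when $\sK\gg\sM$ we have $\Delta^{\tn{down}} = 1+\frac{\sM}{\sK-1}\to 1$ and $\Delta^{\tn{down}}_{\tn{lb}} = \frac{\sK}{\sK-1}\to 1$, so the ratio $\Delta^{\tn{down}}/\Delta^{\tn{down}}_{\tn{lb}}\to 1$; this is an immediate limit computation. For the uplink, when $\sK\gg\sM$ the lower bound is $\frac{\sK}{\sK-1}\to 1$ (since $\max\{\sM,\sK\}=\sK$), while $\Delta^{\tn{up}} = \left(1+\frac{\sM}{\sK-1}\right)\frac{\sM}{\sM-1}$; the first factor tends to $1$, but the second factor $\frac{\sM}{\sM-1}$ tends to $1$ only if additionally $\sM\to\infty$ — hence the extra hypothesis $\sM\gg 0$. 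So under $\sK\gg\sM\gg 0$, $\Delta^{\tn{up}}/\Delta^{\tn{up}}_{\tn{lb}}\to 1$. I would present these as short double-limit arguments, being careful to make precise the order of limits encoded by "$\sK\gg\sM$" and "$\sM\gg 0$" (e.g., $\sM\to\infty$ first or jointly with $\sK/\sM\to\infty$).

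The substantive part is the worst-case multiplicative gap of $4$ for the uplink, valid for all $\sK,\sM$. I would bound the ratio $\rho \triangleq \Delta^{\tn{up}}/\Delta^{\tn{up}}_{\tn{lb}}$ by splitting on whether $\sM\leq\sK$ or $\sM>\sK$, and also treating the small cases $\sK=2$ and $\sM=3$ (allowed by the theorem hypotheses) explicitly since the formula for $\Delta^{\tn{up}}$ branches at $\sK=2$ and $\frac{\sM}{\sM-1}$ is largest at $\sM=3$. In the regime $\sK\geq 3$, write $\rho = \frac{(\sK+\sM-1)\sM}{\max\{\sM,\sK\}(\sM-1)}$. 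When $\sK\geq\sM$ the denominator is $\sK(\sM-1)$, and $\sK+\sM-1\leq 2\sK-1<2\sK$, so $\rho < \frac{2\sK\cdot\sM}{\sK(\sM-1)} = \frac{2\sM}{\sM-1}\leq \frac{2\cdot 3}{2}=3$ for $\sM\geq 3$. When $\sM>\sK$ the denominator is $\sM(\sM-1)$, and $\sK+\sM-1<2\sM$, so $\rho < \frac{2\sM\cdot\sM}{\sM(\sM-1)} = \frac{2\sM}{\sM-1}\leq 3$ again. The $\sK=2$ case gives $\Delta^{\tn{up}}=\frac{\sM^2}{(\sK-1)(\sM-1)}=\frac{\sM^2}{\sM-1}$ against $\Delta^{\tn{up}}_{\tn{lb}}=\frac{\max\{\sM,2\}}{1}=\sM$, so $\rho=\frac{\sM}{\sM-1}\leq\frac{3}{2}$. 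Taking the maximum over all cases yields $\rho<3<4$. The main obstacle — really just a bookkeeping subtlety rather than a deep one — is making sure every case permitted by the hypotheses $\sM\geq 3$, $\sK\geq 2$, $r=\sK-1$ (in particular the boundary cases and the two branches of the NDT formula) is covered, and confirming that the looser stated bound $4$ comfortably absorbs any slack; I would then remark that the true worst-case ratio is below $3$, with $4$ stated for a clean constant.
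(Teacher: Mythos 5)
Your proposal is correct and follows essentially the same route as the paper: plug $r=\sK-1$ into the achievable NDTs, take the ratio against the converse bounds of Theorem~\ref{theorem:conv}, obtain the asymptotic optimality claims as limit computations, and bound the uplink ratio by $\frac{2\sM}{\sM-1}$ via $\sK+\sM\leq 2\max\{\sM,\sK\}$ (your case split on $\sM\lessgtr\sK$ and $\sK=2$ is just the unrolled form of the paper's single inequality chain). Your observation that the ratio is actually below $3$ for $\sM\geq 3$ is consistent with the paper, which simply states the looser constant $4$.
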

		\begin{proof}
			Let $r=\sK-1$, the attain uplink NDT of the proposed scheme is $\Delta^{\tn{up}} = \begin{cases}
				\frac{\sM}{\sK-1}\frac{\sM}{\sM-1}, & \sK=2 \\
				(\frac{\sM}{\sK-1}+ 1)\frac{\sM}{\sM-1}, & \sK\geq 3  
			\end{cases}$, 
			and $\Delta^{\text{down}} = \frac{\sM +\sK-1}{\sK-1}$ for $\sM\geq 3$.
			We can observe that 
			\begin{align*}
				\frac{\Delta^{\text{up}} }{\Delta_\text{lb}^{\text{up}}} &\leq \frac{\sK+\sM}{\max\{\sM,\sK\}}\frac{\sM}{\sM-1} \\
				&\leq \frac{2\cdot\max\{\sM,\sK\}}{\max\{\sM,\sK\}}\frac{\sM}{\sM-1} \leq \frac{2\sM}{\sM-1}\leq 4.
			\end{align*}
			And $\frac{\Delta^{\text{up}} }{\Delta_\text{lb}^{\text{up}}}\approx 1$ if $\sK\gg\sM\gg 0$.
			Meanwhile, if $\sK\gg\sM$, $\frac{\Delta^{\text{down}} }{\Delta_\text{lb}^{\text{down}}} \leq \frac{\sK+\sM}{\sK} \approx 1 $. This completes the proof.
		\end{proof}}
	\begin{remark}(Communication cost)
		Our scheme has the minimal communication costs, which are measured by the bit length of the transmitted messages. More specifically, when $r=\sK-1$, the uplink and downlink communication costs are $H(\bc_{\cK,\cM})=\frac{\sK}{\sK-1}\sM\sA$ bits and $H(\tld{\bc}_{\cM,\cK})=\frac{\sK}{\sK-1}\sA$ bits, respectively. 
		When $\sK$ is relatively large, $H(\bc_{\cK,\cM})\approx\sM\sA$ bits and $H(\tld{\bc}_{\cM,\cK})\approx\sA$ bits which are $\sM$ times and equal to the bit length of the aggregation $\gD$, respectively, showing that the communication costs are minimal.
		The reason is that to achieve a privacy-guaranteed aggregation of all local gradients, each local value must be uploaded at least once when there is no side information in the servers, and the downlink transmissions should contain at least the same amount of information as the aggregation value. Although the communication costs are asymptotically minimal, the privacy constraints and the channel interference reduce the DoF, resulting in an increment in the NDTs compared to the lower bound.
	\end{remark}
	\begin{figure}[tb]
		\centering
		\subfloat[Uplink]{
			\includegraphics[width=0.9\columnwidth]{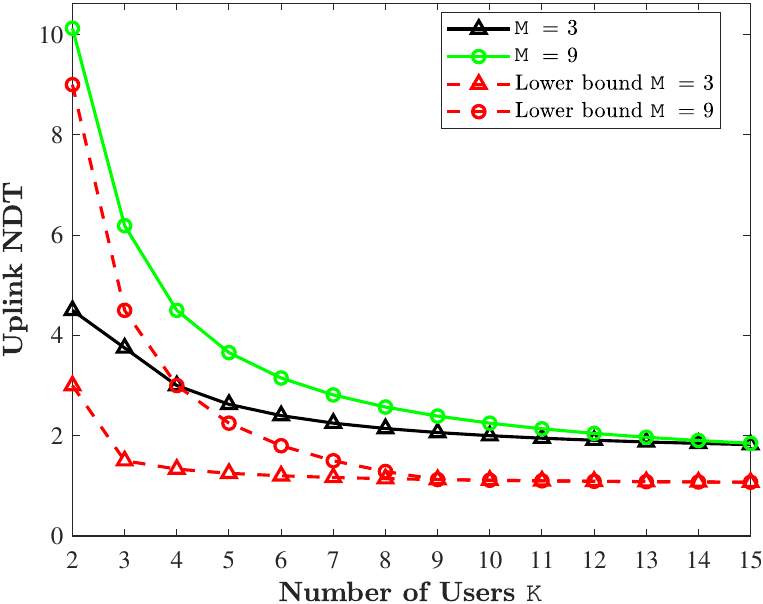}
		}\\
		\subfloat[Downlink]{
			\includegraphics[width=0.9\columnwidth]{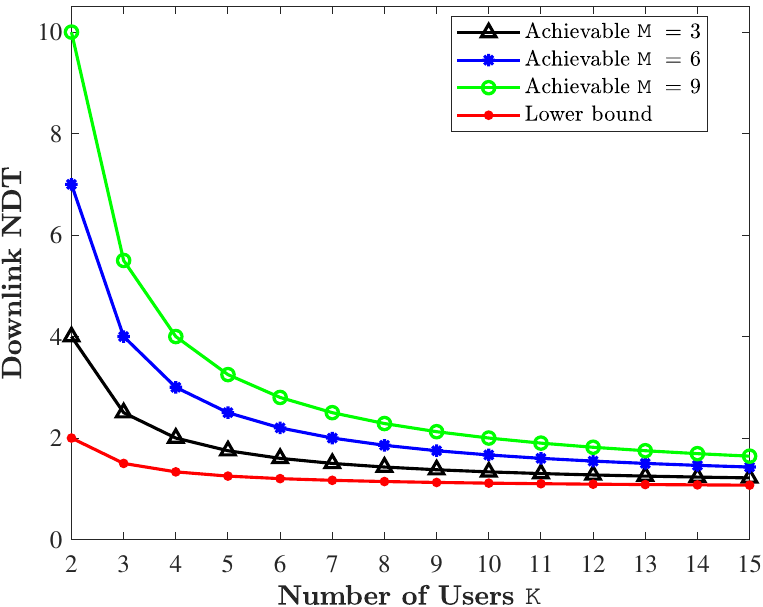}
		}
		\caption{The comparison of achievable NDT and the converse bound when $r=\sK-1$ and the number of users $\sM$ are $3$, $6$ and $9$.  }
		\label{res02}
	\end{figure}
	Fig.~\ref{res02} compares the achievable NDT of the proposed scheme and the lower bound when setting $r=\sK-1$. As can be seen from the Corollary \ref{corol:gap}, the uplink NDT of our scheme is order-optimal. In particular, the gap between the achievable uplink NDT and the optimum becomes smaller than $4$ when $\sK$ and $\sM$ have a relatively large difference. The achievable downlink NDT is also order-optimal when the number of servers is larger than the number of users, but it is not satisfactory when $\sK$ is small. {It can be seen that both the achievable NDT and the lower bound decrease with $\sK$.} Moreover, when $\sK$ is relatively large, both the uplink NDT and the downlink NDT are close to their respective lower bounds, indicating that increasing the number of servers is beneficial for reducing communication latency.

	\subsection{Privacy Analysis}
	\label{subsec:priv}
	In the following,  we would like to show that the information leakage at the server can be bounded.
	
	Firstly, the equivocation for $\bg_{\cM}$ at a server $j$ is
	\begin{align*}
		\frac{H(\bg_{\cM}|\bfY_{j}, \tilde{\bfY}_j )}{H(\bg_{\cM})} = 1 - \frac{ I(\bg_{\cM};\bfY_{j},\tilde{\bfY}_j ) }{ H(\bg_{\cM}) }.
	\end{align*}
	Since $H(\bg_\cM)$ is constantly equal to $\sM\cdot\sA$ bits, it is needed to bound the mutual information term. 
	$I(\bg_{\cM};\bfY_{j},\tilde{\bfY}_j )$ can be bounded as 
	\begin{align} \label{general_equivocation_1}
		\lefteqn{I(\bg_{\cM};\bfY_{j},\tilde{\bfY}_j )} \nonumber\\
		&= I(\bg_{\cM};\bfY_{j} ) + I(\bg_{\cM};\tilde{\bfY}_j| \bfY_{j} ) \nonumber\\
		& \leq I(\bg_{\cM};\bfY_{j},\bm{u}_{j,\cM}) + I(\bg_{\cM};\tilde{\bfY}_j| \bfY_{j} )\nonumber \\
		& = I(\bg_{\cM};\bm{u}_{j,\cM}) + I(\bg_{\cM};\bfY_{j}|\bm{u}_{j,\cM}) + I(\bg_{\cM};\tilde{\bfY}_j| \bfY_{j} )
	\end{align}
	
	The first mutual information term in \eqref{general_equivocation_1} is bounded as  
	\begin{align}
		I(\bg_{\cM};\bm{u}_{j,\cM}) &\leq I(\bg_{\cM};\bc_{j,\cM}) \label{generalI1_process}\\
		&= H(\bc_{j,\cM} ) - H( \bc_{j,\cM}| \bg_{\cM} ) \nonumber\\
		&= H(\bc_{j,\cM} ) - H( U_{j}^L\cdot \tilde{\bg} + U_{j}^R\cdot\tilde{\bn}| \bg_{\cM}) 
		\label{generalI1_n}\\
		&= H(\bc_{j,\cM} ) - H( U_{j}^R\cdot\tilde{\bn} ) = 0, \label{general_pr_I1eq0}
	\end{align}
	where $U_{j}^L$, $U_{j}^R$ denote the submatrices consisting of the rows of $U^L$, $U^R$ indexed by $j$.
	Here, \eqref{generalI1_process} comes from data processing inequality and definition of mutual information; \eqref{generalI1_n} follows from \eqref{largrang_matrix}; \eqref{general_pr_I1eq0} holds because $\tilde{\bn}$ is independent of $\bg_{\cM}$, and $\bc_{j,\cM}$ is in $\mathbb{F}^{1\times \sM\times\frac{p}{r}}$, $U_{j}^R\cdot\tilde{\bn}$ has a random and uniform distribution in $\mathbb{F}^{1\times \sM\times\frac{p}{r}}$. 
	
	{In the following, the analysis is tailored to the transmission at round $a$ but omits the index of transmission round. The results hold throughout the whole transmission because of the independence among different transmission rounds.} 
	The second mutual information term in \eqref{general_equivocation_1} is bounded as
	\begin{align}
		\lefteqn{I( \bg_{\cM};\bfY_{j} | \bm{u}_{j,\cM} )} \nonumber\\
		&\leq I( \bg_{\cM}, \bm{u}_{\cK\backslash \{j\},\cM};\bfY_{j} | \bm{u}_{j,\cM} ) \nonumber\\
		&= I( \bm{u}_{\cK\backslash \{j\}, \cM};\bfY_{j} | \bm{u}_{j,\cM} ) + I( \bg_{\cM};\bfY_{j} | \bm{u}_{\cK,\cM} ) \label{general_I2_chainrule_1} \\
		&= I( \bm{u}_{\cK\backslash\{j\},\cM}, \bm{v}_{\cK};\bfY_{j} | \bm{u}_{j,\cM} ) - I( \bm{v}_{\cK};\bfY_{j}|\bm{u}_{\cK,\cM} ) \label{general_I2_chainrule_2}
	\end{align} 
	where \eqref{general_I2_chainrule_1} comes from the chain rule and \eqref{general_I2_chainrule_2} comes from the chain rule and the fact that $\bg_{\cM}$ can be recovered from $\bm{u}_{\cK,\cM}$.
	
	{Consider the received signals at server $j$, we have \begin{align}
		\mbf{Y}_j = \mbf{D}_j \begin{bmatrix}
			\bm{u}_{j,\cM\backslash\{a\}} \\\bm{v}_{j,a}
		\end{bmatrix} + \sum_{k\in\cK\backslash\{j\}}\mbf{Q}^j_k \begin{bmatrix}
			\bm{u}_{k,\cM\backslash\{a\}} \\\bm{v}_{k,a}
		\end{bmatrix},
	\end{align}
	where 	$\mbf{D}_j = \big[ \bfH_{j,1}\bfPhi^{[j,1]}  \ \dots\   \bfH_{j,\sM}\bfPhi^{[j,\sM]}\ \bfH_{j,a}\bfPhi^{[j,a]} \big]$, $\mbf{Q}^j_{k} = \big[ \bfH_{j,1}\bfPhi^{[k,1]}  \ \dots\  \bfH_{j,\sM}\bfPhi^{[k,\sM]} \  \bfH_{j,a}\bfPhi^{[k,a]} \big]$	and $\begin{bmatrix} \bm{u}_{k,\cM\backslash\{a\}} \\\bm{v}_{k,a} \end{bmatrix} = \big[ \bm{u}_{k,1}^{T}\  \cdots\  \bm{u}_{k,\sM}^T\ \bm{v}_{k,a}^T \big]^{T}$. }
	According to the design of the beamforming matrices $\bfPhi^{[j,i]}$s and $\bfPhi^{[j,a]}$, the transform matrix $\mbf{Q}^j_{k}$ has the following property
	\begin{align*}
		\text{span}(\mbf{Q}^j_{k}) \prec \text{span}( \bfH_{j,a}\bfPhi^{[k,a]}).
	\end{align*} 
	This means that the artificial noise dominates every dimension of the subspace spanned by the private messages \cite{wang2015secure}. Let $\bar{\mbf{Q}}^j_k = \bfH_{j,a}\bfPhi^{[k,a]}$, continuing the proof in \eqref{general_I2_chainrule_2}, we get
	\begin{align}
		\lefteqn{I( \bg_{\cM};\bfY_{j}|\bm{u}_{j,\cM} )} \nonumber\\
		&\leq I( \bm{u}_{\cK\backslash\{j\},\cM}, \bm{v}_{\cK};\bfY_{j} | \bm{u}_{j,\cM} ) - I( \bm{v}_{\cK};\bfY_{j}|\bm{u}_{\cK,\cM} ) \nonumber\\
		&\leq \log \det\Big(\sP\cdot \!\!\!\! \sum_{k\in\cK}\bar{\mbf{Q}}^j_k(\bar{\mbf{Q}}^j_k)^H + \mbf{I} \Big) \nonumber\\
		&\hspace{1cm} - \log\det \Big(\sP\cdot \!\!\!\! \sum_{k\in\cK}\bar{\mbf{Q}}^j_k(\bar{\mbf{Q}}^j_k)^H + \mbf{I} \Big)\nonumber + o(\log(\sP))  \\
		& = \sK(n+1)^{\Gamma}\log(\sP) - \sK(n+1)^{\Gamma}\log(\sP) + o(\log(\sP)) \nonumber\\
		& = o(\log(\sP)). \label{general_pr_I2eq0}
	\end{align}
	
	Similarly, the third mutual information term in \eqref{general_equivocation_1} is bounded as  
	\begin{align}
		\lefteqn{I(\bg_{\cM};\tilde{\bfY}_j| \bfY_{j} )} \nonumber\\
		&= I(\bg_{\cM}, \tilde{\bm{v}}_\cM;\tilde{\bfY}_j| \bfY_{j} ) - I(\tilde{\bm{v}}_\cM;\tilde{\bfY}_j| \bfY_{j}, \bg_{\cM} ) \nonumber\\
		&\leq H(\tilde{\bfY}_j| \bfY_{j}) - H(\tilde{\bfY}_j| \bfY_{j}, \bg_{\cM}, \tilde{\bm{v}}_\cM) \nonumber\\
		&\hspace{1.5cm}- H(\tilde{\bm{v}}_\cM| \bfY_{j}, \bg_{\cM}) + H(\tilde{\bm{v}}_\cM| \bfY_{j}, \bg_{\cM}, \tilde{\bfY}_j) \nonumber\\
		&\leq H(\tilde{\bfY}_j) - H(\tilde{\bfY}_j| \bfY_{j}, \bg_{\cM}, \tilde{\bm{u}}_{\cM,\cK}, \tilde{\bm{v}}_\cM) \nonumber\\
		&\hspace{1.5cm}- H(\tilde{\bm{v}}_\cM) + H(\tilde{\bm{v}}_\cM | \tilde{\bfY}_j ) \label{general_3_tldv}\\
		&\leq H(\tilde{\bfY}_j) - H(\tilde{\bfY}_j| \tilde{\bm{u}}_{\cM,\cK}, \tilde{\bm{v}}_\cM) - H(\tilde{\bm{v}}_\cM) + H(\tilde{\bm{v}}_\cM | \tilde{\bfY}_j ) \label{general_3_mar}\\
		&\leq I(\tilde{\bm{u}}_{\cM,\cK}, \tilde{\bm{v}}_\cM;\tilde{\bfY}_j ) - I(\tilde{\bm{v}}_\cM; \tilde{\bfY}_j ) \nonumber\\
		&\leq \log\det\Big(\sP\cdot\!\!\!\! \sum_{m\in\cM\backslash\{a\}} \tilde{\mbf{Q}}_m^j (\tilde{\mbf{Q}}_m^j)^H + \mbf{I} \Big) \nonumber\\
		&\hspace{1cm} - \log\det\Big(\sP\cdot\!\!\!\! \sum_{m\in\cM\backslash\{a\}} \bar{\tilde{\mbf{Q}}}_m^j (\bar{\tilde{\mbf{Q}}}_m^j)^H + \mbf{I} \Big) + o(\log(\sP))\nonumber\\
		&= o(\log(\sP)) \label{general_pr_I3eq0},
	\end{align}
	where \eqref{general_3_tldv} holds since conditioning reduces entropy and $\tld{\bm{v}}_\cM$ is the random artificial noise that is independent with $\bfY_j$ and $\bg_{\cM}$, \eqref{general_3_mar} holds since $\bfY_j$ and $\bg_{\cM}$ are independent with $\tilde{\bfY}_j$ given $\tilde{\bm{u}}_{\cM,\cK}$ and $\tilde{\bm{v}}_\cM$. $\tilde{\mbf{Q}}_m^j, \bar{\tilde{\mbf{Q}}}_m^j$ have the similar form to $\mbf{Q}_m^{j},\bar{\mbf{Q}}_m^j$ but act in the downlink.
	
	Substituting \eqref{general_pr_I1eq0}, \eqref{general_pr_I2eq0}, and \eqref{general_pr_I3eq0} into \eqref{general_equivocation_1}, it is readily shown that, $\forall j\in\cK$, 
	\begin{IEEEeqnarray}{rCl}
		 \lim_{A,P\to\infty}  \Delta_{\bg_{\cM}}^{j} &=&   \lim_{A,P\to\infty}  \frac{H(\bg_{\cM}|\bfY_{j},\tilde{\bfY}_j)}{H(\bg_\mathcal{M})} \nonumber\\
		 &=& \lim_{A,P\to\infty}   \big(1 -\frac{o(\log{P})}{\sM\sA} \big)\nonumber\\&=&  1 \nonumber.
	\end{IEEEeqnarray}

	\section{Conclusion}
	\label{sec:conclusion}
	{
		In this work, we considered the secure gradient aggregation problem for multi-server wireless federated learning systems.
		We proposed a privacy-preserving coded aggregation scheme and characterized the uplink and downlink communication latency measured by the NDT. Our scheme guarantees that each server cannot infer either the local gradients or the aggregation value in an information-theoretic sense. We also established lower bounds on the optimal uplink and downlink NDT and theoretically proved that the proposed scheme is asymptotically optimal when the number of servers and users is sufficiently large. 
		}

	\begin{appendices}

	\section{Proof of the Theorem \ref{theorem:conv}}
	\label{sec:convproof}
	
	We first present the converse proof for the uplink NDT. 
	According to the chain rule of mutual information, we have the following relation
	\begin{align}
		&H(\bg_D| \gmt,\nmt ) = H(\bg_D | \bc_{\cK,\cM}, \gmt,\nmt) \nonumber\\
		& \hspace{2.5cm} + I(\bg_D; \bc_{\cK,\cM}|\gmt,\nmt) \label{cp:up-chain-1}.
	\end{align} 
	From the correctness constraint, the first term of the right-hand side of \eqref{cp:up-chain-1} can be bounded as
	\begin{align}
		\lefteqn{H(\gD | \bc_{\cK,\cM}, \gmt,\nmt)}\nonumber\\
		&= H(\gD)- I(\gD;\bc_{\cK,\cM},\gmt,\nmt) \nonumber\\
		&= H(\gD)- I(\gD;\bc_{\cK,\cM},\gD, \gmt,\nmt) \nonumber\\
		&\leq H(\gD)- I(\gD;\bc_{\cK,\cM},\bg_i) \nonumber\\
		&\leq H(\gD)- I(\gD;(\tld{c}_{i,1},\cdots,\tld{c}_{i,\sK}),\bg_{{i}}) \label{cp:up-tcikfunofckm}\\
		&= 0 \label{cp:up-correct_cons},
	\end{align}
	where \eqref{cp:up-tcikfunofckm} holds because $(\tld{c}_{i,1},\cdots,\tld{c}_{i,\sK})$ is determined by $\bc_{\cK,\cM}$, and \eqref{cp:up-correct_cons} comes from the correctness constraint \eqref{cons:correct}.
	Besides, from the secrecy constraints we get
	\begin{IEEEeqnarray}{rCl}
		\lefteqn{I(\gD; \bc_{j,\cM}|\gmt,\nmt)} \nonumber\\
		&=& H(\gD| \gmt,\nmt) - H(\gD|\bc_{j,\cM}, \gmt,\nmt) \nonumber\\
		&=& H(\bg_i) - H(\bg_i|\bc_{j,i}) \label{cp:up-priv_gn-mutual_indep}\\
		&=& I(\bg_i; \bc_{j,i}) \leq I(\bg_i; \bfY_{j}) \label{cp:up-priv_cjidecbyyj} \\
		&=& \epsilon_1 \label{cp:up-priv_giyj},
	\end{IEEEeqnarray}
	where \eqref{cp:up-priv_gn-mutual_indep} follows from the mutual independence of the gradients and the local randomness, \eqref{cp:up-priv_cjidecbyyj} holds since $\bc_{j,i}$ can be decoded from $\bfY_{j}$, and \eqref{cp:up-priv_giyj} comes from privacy constraint, i.e., Definition \ref{DefPrivacy}, and $\epsilon_1$ vanishes as $\sA,\sP$ tends to infinity. 
	
	With \eqref{cp:up-correct_cons} and \eqref{cp:up-priv_giyj}, we can continue the \eqref{cp:up-chain-1} as follows
	\begin{IEEEeqnarray}{rCl}
		\lefteqn{H(\bg_D| \gmt,\nmt )} \nonumber\\
		& =& I(\gD;\bc_{\cK,\cM} | \gmt,\nmt) \nonumber\\
		&& \hspace{1cm}- I(\gD; \bc_{j,\cM}| \gmt, \nmt) + \epsilon_1 \nonumber\\
		& =& I(\gD; \bc_{\cK\backslash\{j\},\cM} | \bc_{j,\cM}, \gmt, \nmt) + \epsilon_1 \nonumber \\
		& =& H(\bc_{\cK\backslash\{j\},\cM}| \bc_{j,\cM}, \gmt,\nmt) \nonumber\\
		&& \qquad - H(\bc_{\cK\backslash\{j\},\cM} | \bc_{j,\cM}, \gD,\gmt,\nmt) + \epsilon_1 \nonumber\\
		& \leq& H(\bc_{\cK\backslash\{j\},\cM}| \bc_{j,\cM}, \gmt,\nmt) + \epsilon_1 \nonumber\\
		& =& H(\bc_{\cK\backslash\{j\}, i}| \bc_{j,i}) + \epsilon_1 \nonumber\\
		&{\leq} & \frac{\sK-1}{\sK}\sum_{k=1}^{\sK}H(\bc_{k,i}) +\epsilon_1, \label{cp:up-ine_cki}
	\end{IEEEeqnarray}
	where \eqref{cp:up-ine_cki} comes from the \emph{Han's inequality}.
	
	From \eqref{cp:up-ine_cki} we can get
	\begin{IEEEeqnarray}{cCl}
		\frac{\sM\cdot\sK}{\sK-1}\cdot \sA -\epsilon_2 &\leq& \sum_{i=1}^{\sM}\sum_{k=1}^{\sK}H(\bc_{k,i}) \leq I(\bfX_{\cM}; \bfY_{\cK})\nonumber \\
		&\leq& \min\{\sM,\sK\}\cdot \sTu\cdot(\log(\sP)+o(1))\label{cp:up-mimocap},
	\end{IEEEeqnarray}
	where $\epsilon_2$  vanishes as $\sA,\sP$ go to infinity, and \eqref{cp:up-mimocap} comes from {the results on the capacity of the multiple-input-multiple-output channels}.
	Then it is readily shown that 
	\begin{align}
		\lim_{A,P\to\infty}\frac{\sTu}{\sA/\log(\sP)} \geq \frac{\max\{\sM,\sK\}}{\sK-1}.
	\end{align}
	
	Next, we present the converse proof for the downlink NDT. 
	Begin with the chain rule of the mutual information, for $\forall t\neq i \neq j$, the following relations hold.
	\begin{IEEEeqnarray}{rCl}
		\lefteqn{H(\gD | \gmt,\nmt)} \nonumber\\
		&=& H(\gD| \tilde{\bc}_{t, \cK}, \gmt,\nmt) \nonumber\\
		&&\hspace{1cm}+ I(\gD; \tilde{\bc}_{t,\cK}| \gmt,\nmt) \nonumber\\
		&\leq& H(\gD| \tbc_{t, \cK}, \bg_{\cM\backslash\{i\}}) + I(\gD; \tilde{\bc}_{t,\cK}| \gmt,\nmt) \nonumber\\
		& = & 0 + I(\gD; \tilde{\bc}_{t,\cK\backslash\{j\}}| \tilde{\bc}_{t,j}, \gmt,\nmt) \nonumber\\
		&& \hspace{1cm}+ I(\gD; \tilde{\bc}_{t,j}| \gmt,\nmt) \label{cp:down-correct_cons}
	\end{IEEEeqnarray}
	where \eqref{cp:down-correct_cons} comes from the correctness constraint \eqref{cons:correct} and the chain rule.
	On the other hand, the following relationship holds
	\begin{align}
		\lefteqn{I(\gD; \tbc_{t,j}| \gmt,\nmt)} \nonumber\\
		&= I(\gD;\tbc_{t,j}, \bc_{j,\cM}| \gmt,\nmt) \nonumber\\
		& \qquad -I(\gD; \bc_{j,\cM}| \tbc_{t,j},\gmt,\nmt) \nonumber\\
		&= I(\gD;\tbc_{t,j} | \bc_{j,\cM}, \gmt,\nmt) \nonumber\\
		& \qquad + I(\gD; \bc_{j,\cM}| \gmt,\nmt) \nonumber\\
		& \qquad  -I(\gD; \bc_{j,\cM}| \tbc_{t,j},\gmt,\nmt) \nonumber\\
		&= I(\gD; \bc_{j,\cM}| \gmt,\nmt) \nonumber\\
		& \qquad -I(\gD; \bc_{j,\cM}| \tbc_{t,j},\gmt,\nmt) \label{cp:tcc} \\
		&\leq  I(\gD; \bc_{j,\cM}| \gmt,\nmt) \overset{\eqref{cp:up-priv_giyj}}{\leq} \epsilon_1 \label{cp:down-priv},
	\end{align}
	where $\eqref{cp:tcc}$ holds because $\tbc_{t,j}$ is a function of $\bc_{j,\cM}$. 
	Substituting the \eqref{cp:down-priv} into the \eqref{cp:down-correct_cons}, we can get
	\begin{IEEEeqnarray}{rCl}
		\lefteqn{H(\gD | \gmt,\nmt)} \nonumber\\
		&=& I(\gD; \tilde{\bc}_{t,\cK\backslash\{j\}}| \tilde{\bc}_{t,j}, \gmt,\nmt) + \epsilon_1 \nonumber\\
		&=& H(\tilde{\bc}_{t,\cK\backslash\{j\}}| \tilde{\bc}_{t,j},\gmt,\nmt) \nonumber\\
		&& \qquad - H(\tilde{\bc}_{t,\cK\backslash\{j\}}| \tilde{\bc}_{t,j},\gD, \gmt,\nmt) +\epsilon_1 \nonumber\\
		&\leq&  H(\tilde{\bc}_{t,\cK\backslash\{j\}}| \tilde{\bc}_{t,j}) + \epsilon_1 \nonumber\\
		&\leq & \frac{\sK-1}{\sK} \sum_{k=1}^{\sK} H(\tilde{\bc}_{t,k}) + \epsilon_1 \label{cp:down-ine_tcki}
	\end{IEEEeqnarray}
	\eqref{cp:down-ine_tcki} comes from the \emph{Han's inequality}.

	From \eqref{cp:down-ine_tcki} we can attain 
	\begin{IEEEeqnarray}{cCl}
		\frac{\sK}{\sK-1}\cdot \sA -\epsilon_3 &\leq& \sum_{k=1}^{\sK}H(\tbc_{t,k}) \nonumber \\
		&\leq& I(\tilde{\bfX}_{\cK}; \tilde{\bfY}_{t} | \bg_{t} ) \leq H(\tilde{\bfY}_{t})\nonumber\\
		&\leq& \sTd\cdot(\log(\sP)+o(1))\label{cp:maccap},
	\end{IEEEeqnarray}
	where $\epsilon_3$ vanishes as $\sA,\sP$ go to infinity.
	Then it is readily shown that
	\begin{align}
		\lim_{A,P\to\infty}\frac{\sTd}{\sA/\log(\sP)} \geq \frac{\sK}{\sK-1}.
	\end{align}
	This completes the proof.
	
	\end{appendices}
	\ifCLASSOPTIONcaptionsoff
	\newpage
	\fi


	
	%
	\bibliographystyle{IEEEtran}
	\bibliography{ref.bib}
	
	%

	
	

\end{document}